\title{Pricing energy spread options with variance gamma-driven Ornstein-Uhlenbeck dynamics}
\author{Tim Leung\thanks{Department of Applied Mathematics, University of Washington, Seattle, WA 98195, USA. Email: \href{mailto:timleung@uw.edu}{timleung@uw.edu}.}
    \and Kevin W.\ Lu\thanks{Research School of Finance, Actuarial Studies \& Statistics, Australian National University, Canberra,  ACT 0200, Australia.  Email: \href{mailto:kevin.lu@anu.edu.au}{kevin.lu@anu.edu.au}.}}
\date{\today}
\numberwithin{equation}{section} 
\let\originalleft\left
\let\originalright\right
\def\left#1{\mathopen{}\originalleft#1}
\def\right#1{\originalright#1\mathclose{}}
\definecolor{darkgreen}{rgb}{0.00, 0.55, 0.00}
\newtheorem{theorem}{Theorem}[section]
\newtheorem{corollary}[theorem]{Corollary}
\newtheorem{lemma}[theorem]{Lemma}
\newtheorem{proposition}[theorem]{Proposition}
\theoremstyle{definition}
\theoremstyle{remark}
\newtheorem{remark}[theorem]{Remark}
\newcommand{\rmd}{{\rm d}}
\newcommand{\rmi}{{\rm i}}
\newcommand{\eqd}{\stackrel{D}{=}}
\newcommand{\halmos}{\quad\hfill\mbox{$\Box$}}
\newcommand{\CC}{\mathbb{C}}
\newcommand{\DD}{\mathbb{D}}
\newcommand{\EE}{\mathbb{E}}
\newcommand{\FF}{\mathbb{F}}
\newcommand{\RR}{\mathbb{R}}
\newcommand{\PP}{\mathbb{P}}
\newcommand{\QQ}{\mathbb{Q}}
\newcommand{\BBB}{{\cal B}}
\newcommand{\DDD}{{\cal D}}
\newcommand{\FFF}{{\cal F}}
\newcommand{\ZZZ}{{\cal Z}}
\newcommand{\skal}[2]{\langle #1,#2\rangle}
\newcommand{\eins}{{\bf 1}}
\newcommand{\bfnull}{{\bf 0}}
\newcommand{\bfb}{{\bf b}}
\newcommand{\bfc}{{\bf c}}
\newcommand{\bfe}{{\bf e}}
\newcommand{\bfh}{{\bf h}}
\newcommand{\bfi}{{\bf i}}
\newcommand{\bfj}{{\bf j}}
\newcommand{\bft}{{\bf t}}
\newcommand{\bfx}{{\bf x}}
\newcommand{\bfy}{{\bf y}}
\newcommand{\bfz}{{\bf z}}
\newcommand{\bfB}{{\bf B}}
\newcommand{\bfG}{{\bf G}}
\newcommand{\bfH}{{\bf H}}
\newcommand{\bfR}{{\bf R}}
\newcommand{\bfS}{{\bf S}}
\newcommand{\bfT}{{\bf T}}
\newcommand{\bfV}{{\bf V}}
\newcommand{\bfX}{{\bf X}}
\newcommand{\bfY}{{\bf Y}}
\newcommand{\bfZ}{{\bf Z}}
\newcommand{\bfalpha}{\boldsymbol{\alpha}}
\newcommand{\bfepsilon}{\boldsymbol{\epsilon}}
\newcommand{\bfLambda}{\boldsymbol{\Lambda}}
\newcommand{\bfmu}{\boldsymbol{\mu}}
\newcommand{\bfSigma}{{\boldsymbol{\Sigma}}}
\newcommand{\bftheta}{\boldsymbol{\theta}}
\newcommand{\bfvtheta}{\boldsymbol{\vartheta}}
\newcommand{\bfeta}{\boldsymbol{\eta}}
\newcommand{\myCov}{\operatorname{Cov}}
\newcommand{\myVar}{\operatorname{Var}}
\newcommand{\myCorr}{\operatorname{Corr}}
\newcommand{\wt}{\widetilde}
\newcommand{\tr}{\diamond}
\newcommand{\given}{{\,\vert\,}}
\newcommand{\wh}{\widehat}
\newcommand{\name}{\operatorname}
\DeclareMathOperator*{\argmin}{arg\,min}
\begin{document}

    \maketitle
    
    \begin{abstract}

        We consider the pricing of energy spread options for spot prices following an exponential Ornstein-Uhlenbeck process driven by a sum of independent multivariate variance gamma processes, which gives rise to mean-reverting, infinite activity price dynamics.   Within this class of driving processes, the Esscher transform is used to obtain an equivalent martingale measure with a focus on the weak variance alpha-gamma process. By deriving an analytical formula for the cumulant generating function of the innovation term, we obtain a pricing formula for forwards and apply the FFT method of Hurd and Zhou to price  spread options. Lastly, we demonstrate how the model should be both estimated on energy prices under the real world measure and calibrated on forward or call prices, and provide numerical results for the pricing of spread options.

    \end{abstract}

    \section{Introduction}
    Energy is a highly volatile commodity. Electricity markets in particular experience large price fluctuations due to temporary imbalances in supply and demand, weather events, unplanned power plant outages, and the intermittency of renewable energy generation.  Retailers are exposed to this market risk as they buy electricity at the spot price and sell to customers at a fixed price. Energy derivatives are used to manage these risks, and this is the main topic of our study.   A detailed monograph is provided by Benth, {\v S}altyt{\.e} Benth and Koekebakker \cite{benthbook}.

    Specifically, we focus on the pricing of spread options, which are  European-style derivatives on a bivariate price process $(S_1,S_2)$ whose payoff is
    \begin{align*}
        f_T = (S_1(T)-S_2(T)-K)^+
    \end{align*}
    with strike price $K$ and maturity time $T$.   Energy prices tend to be highly correlated across commodities and regions. For example, electricity prices are related to the price of gas used to generate it. An option on this difference is called a spark spread, which are used by generators to hedge against unanticipated spikes in production costs. As another example, the physical interconnectors between the five states of the National Electricity Market (NEM) in Australia also induce a correlation, and an option on such a difference would be a regional spread. The survey by Carmona and Durrleman \cite{cardur03} catalogs the variety of spread options. In general, there is no known closed-form pricing formula for spread options, except when  $K=0$ in a lognormal model where it reduces to the well-known Margrabe formula.

    To approach this problem, we begin with the stylized facts that energy prices exhibit seasonality, mean reversion, and jumps. To account for these features, we assume that the $n$-dimensional energy spot price process $\bfS = (\bfS(t))_{t\in[0,T]}$  under the real world measure $\PP$ follows
    \begin{align}
        \bfS(t) = \exp(\boldsymbol{\Lambda}(t) +\bfX(t)), \quad t\in[0,T],\label{priceproc}
    \end{align}
    where $\bfLambda(t)$ is a deterministic seasonality function, and $\bfX$ is a L\'evy-driven Ornstein-Uhlenbeck process (LDOUP) defined by
    \begin{align}
        \rmd\bfX(t) =-\lambda\bfX(t)\rmd t + \rmd \bfZ(\lambda t),\quad \bfX(0)=\bfX_0, \label{ldoup}
    \end{align}
    where $\bfZ$ is the \emph{background driving L\'evy process (BDLP)}, $\lambda>0$ is  the speed of mean reversion, and the independent random vector $\bfX_0$ is the initial value. The  Ornstein-Uhlenbeck  part ensures mean reversion, while the L\'evy part allows for jumps, and  indeed the use of LDOUPs is widespread in energy price modeling. For example, see \cite{BKMB, benthbook,BenthPir18,cf05,sab2020,sab2023}.
    
    The BDLP $\bfZ$ we use herein is a sum of independent multivariate  variance gamma (VG) processes with a focus on the weak variance alpha-gamma (WVAG) process by Buchmann, Lu and Madan \cite{BLM17a} as a subclass. A multivariate VG process is a multivariate Brownian motion $\bfB$ subordinated, or time-changed, by a univariate gamma subordinator $G$, and such processes originated with Madan and Seneta  \cite{MaSe90} and have been studied in \cite{BKMS16,FS07}. While a sum of VG processes does not necessarily preserve such a time-change interpretation, the WVAG process being originally defined via weak subordination does. It turns out to be equivalent to a process $\bfZ$ characterized in the following way: consider a subordinator $\bfT$ which is a sum of $G_k$ representing the idiosyncratic  time-change for the $k$th price, and $\bfalpha G_0$ representing a common time change for all prices, where $G_0,\dots,G_n$ are independent gamma subordinators and $\bfalpha\in[0,\infty)^n$, then when  $\bfT(t)$ jumps by $\Delta \bfT(t)$,  $\bfZ(t)$  jumps by the componentwise composition $\bfB(\Delta \bfT(t))$ in distribution, rather than pathwise for traditional subordination. Consequently, the WVAG process, using weak subordination, preserves the essence of traditional subordination in a distributional or weak sense.
    
    This provides a naturally multivariate process with a flexible dependence structure where the Brownian motion has arbitrary covariance in contrast to other approaches. Specifically, these include Luciano and Semeraro \cite{LS10} who in one model use traditional subordination restricted to independent Brownian motions to remain a L\'evy process, Ballotta and Bonfiglioli \cite{BB14} who use summing independent univariate VG processes $V_0,\dots,V_n$ in the form $\bfZ = (\alpha_1V_0+V_1,\dots, \alpha_n V_0 + V_n)$, an approach that is also found in \cite{CaFu2013,HZ10}, and  Gardini, Sabino and Sasso \cite{gss2021} who use the above two approaches with dependent subordinators constructed by self-decomposability.  The last three references include applications to spread option pricing. Our BLDP $\bfZ$, being a sum of VG processes, includes the cases of \cite{LS10,BB14} but not \cite{gss2021}. A similar way of summing univariate LDOUPs to produce a bivariate energy price for spread option pricing is also considered in \cite[Section 9.2.1]{benthbook}. Our LDOUP $\bfX$ include certain restrictions of this model, although both models are more general than each other in different ways.
    
    So in all, our model of the multivariate energy spot prices allows for seasonality, mean reversion, has infinite activity due to the VG-related BDLP, and has a naturally multivariate time-change interpretation.

        Energy prices such as electricity are considered nontraded assets in standard arbitrage theory since electricity bought or sold on the market is instantly consumed and cannot be dynamically and frictionlessly traded to construct a replicating portfolio.
        Since the market is incomplete, there are infinitely many EMMs consistent with NA, so we need to postulate a model for it.  Accordingly, for the pricing of energy derivatives, we use the Esscher transform \cite{gs94} to specify a parametric family of EMMs, known as the {Esscher measure}  $\QQ_{\bfh}$  indexed by the market price of risk (MPR) $\bfh$. 
        This is a common approach for L\'evy-based models in energy markets \cite{benthbook}, as well as equity markets \cite{pasc} in which the incompleteness is due to the presence of jumps.  A further difference between energy and equity markets is that the discounted energy price process is not required to be a  $\QQ_{\bfh}$-martingale, a condition in an equity market that  would ordinarily determine a unique $\bfh$, and hence an EMM $\QQ_{\bfh}$.  Instead, $\bfh$ must be calibrated to, for example, observed forward or call prices as discussed in Section \ref{imple} below.

    Here, we show how the BDLP $\bfZ$ changes under the Esscher transform into a sum of VG processes with modified parameters, which leads to the result that the class of WVAG processes is not closed under Esscher transform. In other related works, Buchmann et al.\ \cite{BKMS16} studies Esscher transforms for a class of strongly subordinated Brownian motions, and  Michaelsen and Szimayer \cite{MiSz17} provides numerical results about pricing options on equities that follow an exponential WVAG process using the Esscher transform, although these rely on results that do not extend to energy derivatives. Our results are more general.
    
    We derive an analytical formula for the cumulant generating function (cgf) of the innovation term of a LDOUP where the BLDP is a multivariate VG process in terms of the dilogarithm function, which generalizes a result in the univariate case by Sabino \cite{sab2020}. This is a key result given in Theorem \ref{innovcgf}, and is  surprising in light of Remark \ref{cgfrem}. Consequently, we obtain an analytical formula for the cgf of the log return along with conditions for this to be finite and when it can be analytically continued into the complex plane. We use this  cgf  to obtain a  forward pricing formula, and combined with the Fourier methods of Carr and Madan \cite{CM99} and Hurd and Zhou \cite{HZ10}, it also gives pricing formulas for  call and spread options, respectively.

   We place our approach to spread option pricing into the context of the existing literature.  In a lognormal model, pricing has been studied using closed form approximations   \cite{cardur03,lzd10},  which started with the well-known Kirk's approximation,  and Fourier methods  \cite{pelsab2014}.     In the general setting using characteristic functions, closed form approximations  have also been developed  \cite{CaFu2013,pellegrino2016}, but contrasts with the Hurd-Zhou method as the latter has arbitrary precision. Our approach to pricing spread options differs from Benth and {\v S}altyt{\.e}-Benth \cite{bsb06} where the  spread is  modeled directly as a univariate LDOUP as opposed to being based on \eqref{priceproc}. It differs from   Gardini and Sabino \cite{gs22} and  Gardini, Sabino and Sasso \cite{gss22b,gss2021} where the spread is  on forwards and the option is priced using a Monte Carlo method,  which is more computationally expensive as opposed to being on energy and using the Hurd-Zhou method. The models are also different. The Hurd-Zhou method has  also been studied for spread options on equities and  interest rates in Alfeus and  Schl\"{o}gl \cite{AlSc2018}.

    For the WVAG BDLP, we provide simulated data examples under both the true model and the fitted model. Under the true model, the spread option price is computed via the analytical formula for the cgf of the log return with the FFT method, and is compared to the Monte Carlo method, showing close agreement.  We also provide an end-to-end implementation of the spread option pricing in a simulation study by fitting the model.

   As we have specified parametric models for both the energy price process $\bfS$ under the  real world measure $\PP$ and the Esscher measure $\QQ_{\bfh}$, it follows that our model allows the seasonality and LDOUP parameters to be estimated under $\PP$, specifically using the MLE method by Lu \cite{Lu21} combined with the analytical formula for the cgf of the innovation term derived here,  and then allows the  MPR $\bfh$ to be calibrated to observed forward or call prices.  Thus, we have a joint estimation and calibration method that coherently leverages both data on the $\PP$-dynamics of the underlying and its derivative prices to fit the model. This fitting method is consistent with a large stream of literature in incomplete equity markets \cite{ek95,hub,MiSz17,pasc,rus15,sco} and energy markets \cite{bd15,BenthPir18,bsb04,benthbook,bs14,bv13,gzlw23}. We  extend this approach to the multivariate energy setting with a numerical implementation, which is otherwise lacking in the literature.  In contrast, there is another stream of literature, which we do not adopt, that fully calibrates the model under $\QQ$ in the univariate \cite{gss22c,MCC98,sab2020,sab2023,sco} and multivariate \cite{ BB14,gs22,gss22b,gss2021,LS10} settings. While this works in the univariate case, since it effectively amounts to defining and calibrating the model under $\QQ$ without any reference to $\PP$, an issue arises in the multivariate case as joint parameters cannot be calibrated in this way given that multi-asset derivatives are not liquidly traded. Consequently, the joint parameters are instead estimated by calibrating to the historical log return correlation under $\PP$, which effectively amounts to making an unjustified assumption that the $\PP$ and $\QQ$ correlations are equal. This is further discussed in Remark \ref{mcmmrem}. Our method is not faced with this problem.

    Then the fitted model can be used to price spread options over a panel of strikes. We find in our example that calibrating to calls is better than calibrating to forwards, and demonstrate the extent to which errors in estimating the model parameters propagate to pricing errors.
    
    Lastly, we provide a real data example in which the model with WVAG BDLP is fit to Australian data and used to price spread options. In particular, we use electricity prices from two states, New South Wales (NSW) and Victoria, to estimate the model and the corresponding base load quarter futures prices to calibrate it, and find that the overall model has a moderately good fit based on the distribution of the innovation term and the calibration error.

    The paper is organized as follows. Section \ref{prelim} reviews the preliminaries, including notation, definitions of the VG and WVAG processes, and useful properties of LDOUPs. Section \ref{analformsec} presents the main probabilistic result, giving the analytical formula for the cgf of the innovation term of a LDOUP where the BDLP is multivariate VG. Section \ref{energydsec} focuses on the pricing of energy derivatives and is the main part of the paper, where we specify the model, show how the BDLP changes under the Esscher transform, obtain the analytical cgf of the log return, and finally apply this to derive pricing formulas for forwards, calls, and spread options. Section \ref{imple} details the implementation details for both the simulated and real data examples, including simulation, estimation, calibration, and the pricing of spread options using the FFT method. Section \ref{ressec} presents simulated data results based on both the true parameters and a simulation study using fitted parameters. Section \ref{realdata} presents the real data results. Section \ref{conc} concludes with some remarks on possible extensions.

    \section{Preliminaries}\label{prelim}    
    We use the convention that $\bfx=(x_1,\dots,x_n)\in\RR^n$  represents a column vector. Let $\bfSigma = (\Sigma_{ij})_{ij=1}^n\in\RR^{n\times n}$ denote an $n\times n$ real matrix. For a multi-index $\alpha$, define $[\cdot]_\alpha$ as extracting the $\alpha$th element. For $A\subseteq\RR^n$, let $\eins_A$ be its indicator function. Let $\DD:=\{\bfx\in\RR^n : \|\bfx\|\leq 1\}$.  For $\bfx,\bfy\in\RR^n$, define the Euclidean inner product  $\skal{\bfx}{\bfy}:=\bfx'\bfy$, and $\|\bfx\|_{\bfSigma} := \sqrt{\skal{\bfx}{\bfSigma \bfx}}$, which is a norm when $\bfSigma$ is positive definite. Let $\bfe_k\in\RR^n$ be the $k$th standard basis vector. Let $I:[0,\infty)\to[0,\infty)$ be the identity function.     The functions $\exp$ and $\log$ are understood to act componentwise  on vectors so, for example, $e^\bfx = (e^{x_1},\dots, e^{x_n})$.

    Throughout, we work on the stochastic basis $(\Omega,\FFF, \FF,\PP)$ with  filtration $\FF=(\FFF_t)_{t\geq0}$, where $\PP$ represents the real world measure. For now, let $\QQ$ be some probability measure equivalent to $\PP$, which represents a risk-neutral measure.
    
    We write a process $\bfX$  in terms of its marginal components and time marginals by $\bfX=(X_1,\dots,X_n)=(\bfX(t))_{t\ge 0}$. Let $\eqd$ denote equality in distribution or law.  Let $\kappa_{\bfX}$, $\Psi_{\bfX}$ and $\Phi_{\bfX}$ denote respectively the    cumulant  generating function (cgf), characteristic exponent and characteristic function of the random vector $\bfX$, or of $\bfX(1)$ if $\bfX$ is a process, under $\PP$. Specifically, for a random vector $\bfX$, these are defined by
    \begin{align*}
        \kappa_{\bfX}(\bftheta) := \log \EE[e^{\skal{\bftheta}{\bfX}}], \quad \bftheta\in\DDD_{\bfX},
    \end{align*}
    where $\DDD_{\bfX} := \{\bftheta\in\RR^n  : \EE[e^{\skal{\bftheta}{\bfX}}] <\infty \}$ is the domain where $\bfX$ has finite exponential moments, or equivalently, where $\kappa_{\bfX}$ is finite, and
    \begin{align*}
        \Phi_{\bfX}(\bftheta) &:= \EE[e^{\rmi \skal{\bftheta}{\bfX}}] = e^{\Psi_{\bfX}(\bftheta)} , \quad \bftheta\in\RR^n.
    \end{align*}
    Let $\kappa_{\bfX}^\QQ$, $\DDD^{\QQ}_\bfX$, $\Phi_{\bfX}^\QQ$ and $\Psi_{\bfX}^\QQ$  denote the respective  functions and domain under $\QQ$. Let $\bfB\sim BM^n(\bfmu,\bfSigma)$ be an $n$-dimensional Brownian motion with drift $\bfmu\in\RR^n$ and  covariance $\bfSigma\in \RR^{n\times n}$. For $\bft=(t_1,\dots, t_n)\in[0,\infty)^n$, define
    \begin{align*}
        \bfmu\tr \bft := (\mu_1 t_1,\dots, \mu_nt_n),\quad \bfSigma \tr \bft  := (\Sigma_{ij}(t_i\wedge t_j))_{ij=1}^n,
    \end{align*}
    which are the mean vector and covariance matrix of the componentwise composition $\bfB(\bft)$.

    \subsection{Variance gamma and related process}

    We begin with a summary of relevant definitions and results from the studies \cite{BKMS16,BLM17a}. Let $\Gamma_S(a,b)$ denote a univariate gamma subordinator with shape $a>0$ and rate $b>0$. An $n$-dimensional  $\bfV\sim VG^n(b,\bfmu,\bfSigma,\bfeta)$ is a \emph{variance gamma (VG) process} is defined by
    \begin{align*}
        \bfV(t) = \bfeta t + \bfB(G(t)), \quad t\geq 0,
    \end{align*}
    where $\bfB\sim BM^n(\bfmu,\bfSigma)$ and $G\sim\Gamma_S(b,b)$ are independent, and  $\bfeta\in\RR^n$. The parameter $\bfeta$ adds a drift. Whenever a process is written without a drift, such as $VG^n(b,\bfmu,\bfSigma)$, it is assumed that $\bfeta=\bfnull$.  The VG process is a L\'evy process with cgf
    \begin{align}\label{vgcgf}
        \kappa_{\bfV}(\bftheta)=\skal{\bfeta}{\bftheta}-b\log K_{\bftheta}  ,\quad \bftheta\in\DDD_{\bfV},
    \end{align}
    where
    \begin{align}
        K_{\bftheta} &:= 1-\frac{\skal{\bfmu}\bftheta}{b}-\frac{\|\bftheta\|^2_{\bfSigma}}{2b},\label{defnm}\\
        \DDD_{\bfV} &:= \{\bftheta\in\RR^n  \,:\, K_{\bftheta}>0 \}.\label{defnd}
    \end{align} 
    It has   characteristic exponent    $\Psi_{\bfV}(\bftheta) =  \kappa_{\bfV}(\rmi \bftheta)$, $\bftheta \in \RR^n$.

    A \emph{sum of  VG  processes}  (or {VG convolution (VGC)})  $\bfZ\sim {VGC}^n_m(\bfb,\bfmu,\bfSigma,\bfeta)$ is defined by
    \begin{align}
        \bfZ(t) =  \bfeta t + \bfV_1(t)+\dots+\bfV_m(t), \quad t\geq 0, \label{VGCpar}
    \end{align}
    where    $\bfV_j\sim VG^n(b_j,\bfmu_j,\bfSigma_j)$, $j=1,\dots,m$, are  mutually independent VG processes with $\bfb := (b_1,\dots, b_m)$,  $\bfmu:= (\bfmu_1,\dots, \bfmu_m)$,  $\bfSigma := (\bfSigma_1,\dots, \bfSigma_m)$
    and  $\bfeta\in\RR^n$. Clearly $\bfZ$ is still a L\'evy process. 
    
    Let $n\geq 2$, $a> 0$, $\bfalpha=(\alpha_1,\dots,\alpha_n) \in(0,1/a)^n$, $\beta_k:= (1-a\alpha_k)/{\alpha_k}$, $k=1,\dots,n$, and $\bfeta\in\RR^n$.  A \emph{weak variance alpha-gamma (WVAG)} process $\bfZ\sim WVAG^n(a,\bfalpha,\bfmu,\bfSigma,\bfeta)$ is defined by
    \begin{align}\label{wvagpropb}
        \bfZ(t) = \bfeta t+ \bfV_0(t)+(V_1(t),\dots,V_n(t)),\quad t\geq 0,
    \end{align}
    where $\bfV_0\sim VG^n( a,a\bfmu\tr\bfalpha,a\bfSigma\tr\bfalpha)$ and  $V_k\sim  VG^1(\beta_k,\allowbreak\alpha_k\beta_k\mu_k,\alpha_k\beta_k\Sigma_{kk})$, $k=1,\dots,n$, are independent.     This representation allows  us to see that the WVAG process is a sum of VG processes with   $\bfZ \sim VGC^n_{n+1}(\bfb_C,\bfmu_C,\bfSigma_C,\bfeta)$, where
    \begin{align}
        \bfb_C &= (a, \beta_1, \dots, \beta_n ),  \label{wvagVGC1}\\
        \bfmu_C &= (a\bfmu\tr\bfalpha, \alpha_1 \beta_1 \mu_1 \bfe_1, \dots, \alpha_n \beta_n \mu_n \bfe_n ), \label{wvagVGC2} \\
        \bfSigma_C &= (a\bfSigma\tr\bfalpha, \alpha_1 \beta_1 \Sigma_{11} \bfe_1 \bfe_1', \dots, \alpha_n \beta_n \Sigma_{nn} \bfe_n \bfe_n' ), \label{wvagVGC3}
    \end{align}
    which  in turn allows the process to be simulated.      The  marginal components of $\bfZ=(Z_1,\dots, Z_n)$ are $Z_k \sim VG^1(1/\alpha_k,\mu_k,\allowbreak\Sigma_{kk},\eta_k)$, which is a general univariate VG process.

    It turns out that the original  definition of  WVAG process is not \eqref{wvagpropb} but rather   $\bfZ \eqd \bfB\odot \bfT$,
    where $\bfB\sim BM^n(\bfmu,\bfSigma)$, $\bfT$ is an $n$-dimensional alpha-gamma subordinator, and $\odot$ is the weak subordination operation (see \cite{BLM17a}), which gives it a  multivariate time-change interpretation that is otherwise unclear in \eqref{wvagpropb}. Moreover, the original definition is equivalent to \eqref{wvagpropb} for $\bfeta=\bfnull$.

    \subsection{L\'evy-Driven Ornstein-Uhlenbeck Processes}

    Let $\bfZ\sim L^n(\bfmu,\bfSigma,\ZZZ)$ denote an $n$-dimensional L\'evy process with characteristic triplet $(\bfmu,\bfSigma,\ZZZ)$, where $\bfmu\in\RR^n$, $\bfSigma\in\RR^{n\times n}$ is a covariance matrix, and $\ZZZ$ is a L\'evy measure.  With reference to \cite{Lu21,mas04,sato99,saya}, we recall that a L\'evy-driven  Ornstein-Uhlenbeck $\bfX \sim {OU}\text{-}\bfZ(\lambda)$ is defined by the SDE \eqref{ldoup} for $t\geq 0$, where $\bfZ\sim L^n(\bfmu,\bfSigma,\ZZZ)$ is the BDLP, $\lambda>0$, and $\bfX_0$ is independent of $\bfZ$.  Note that there is no loss in generality in having the drift term $-\lambda\bfX(t)\rmd t$ instead of $\lambda(\overline{\bfmu}-\bfX(t))\rmd t$ for some $\overline{\bfmu}\in\RR^n$ since the stationary mean of $\bfX$ is controlled by $\EE[\bfZ(1)]$. There is also no loss in generality in having the L\'evy term $\rmd \bfZ(\lambda t)$ instead of $\rmd \wt \bfZ(t)$ for some   general L\'evy  process $\wt\bfZ\sim L^n(\wt\bfmu,\wt\bfSigma,\wt\ZZZ)$ since we can write $\wt\bfZ\eqd\bfZ\circ (\lambda I)$, where $\bfZ\sim L^n(\wt\bfmu/\lambda,\wt\bfSigma/\lambda,\wt\ZZZ/\lambda)$.

    The solution to SDE \eqref{ldoup} is 
        \begin{align*}
            \bfX(t)=e^{-\lambda  (t-s)} \bfX(s) +e^{-\lambda (t-s)}\int_s^t e^{\lambda (u-s)}\,\rmd\bfZ(\lambda u),\quad  0 \leq s\leq t.
        \end{align*}
        For $0 \leq s\leq t$, we follow the notations of in \cite{vstt} and define the random vectors
        \begin{align*}
            \bfZ^*(s,t ):= \int_s^{t }e^{\lambda (u-s)}\,\rmd\bfZ(\lambda u), \quad \bfZ^*(t):=  \bfZ^*(0,t).
    \end{align*}
    Note that 
    \begin{align}
        \bfZ^*(t) \eqd    \bfZ^*(s,s+t) \eqd \int_0^{\lambda t}e^{ u}\,\rmd\bfZ( u) \label{statincr}
    \end{align}
    for $s\geq 0$,  $t\geq 0$, and $\bfZ^*(t)$ depends on $\lambda$.   It then follows that equally spaced observations of the LDOUP at times $t_i=i\Delta$, $i=0,1,\dots,m$, with sampling interval $\Delta>0$, forms an AR(1) process given by
    \begin{align}
        \bfX(t_{i})= b\bfX(t_{i-1})+\bfZ_{b}^{(i)},\quad i=1,\dots,m,\label{sim}
    \end{align}
    where $b=e^{-\lambda\Delta}$, and $\bfZ_{b}^{(i)} \eqd e^{-\lambda\Delta}\bfZ^*(\Delta)$ are the iid innovations. For convenience,  we instead call $\bfZ^*(\Delta)$ the \emph{innovation term}, and it has a characteristic exponent given by
    \begin{align}\label{cez}
        \Psi_{\bfZ^*(t)}(\bftheta) = \int_{0}^{\lambda t} \Psi_\bfZ (e^{ u}\bftheta)\,\rmd u, \quad \bftheta\in\RR^n
    \end{align}     (see \cite[Theorem 2.2]{saya} from which \eqref{statincr}  also follows).
    This implies that if $\bfZ(t)=\bfeta t + \bfZ_1(t)+\dots+ \bfZ_m(t)$, $t\geq 0$, where $\bfZ_1,\dots, \bfZ_m$ are independent L\'evy processes, then
    \begin{align}
        \Psi_{\bfZ^*(t)}(\bftheta)  = \rmi \skal{(e^{\lambda t}-1)\bfeta}{\bftheta} +\sum_{j=1}^m \Psi_{\bfZ_j^*(t)}(\bftheta). \label{wvaginnovce}
    \end{align}

    We will use the VGC and WVAG processes as the BDLP. Specifically, we define  a multivariate LDOUP $\bfX\sim OU\text{-}VGC^n_m(\lambda,\bfb,\bfmu,\bfSigma,\bfeta)$  as an  \emph{OU\text{-}VGC process} if it has speed of mean reversion $\lambda$ and BLDP $\bfZ\sim VGC^n_m(\bfb,\bfmu,\bfSigma,\bfeta)$. In a similar way, the \emph{OU\text{-}VG process} and \emph{OU\text{-}WVAG process} are defined as having BDLPs  $\bfZ\sim VG^n(b,\bfmu,\bfSigma,\bfeta)$ and $\bfZ\sim WVAG^n(a,\bfalpha,\bfmu,\bfSigma,\bfeta)$, respectively. Since the marginal component of a multivariate LDOUP is a univariate LDOUP, it follows that if  $\bfX\sim OU\text{-}WVAG^n(\lambda, a,\bfalpha,\bfmu,\bfSigma,\bfeta)$, then it has  $k$th marginal component $X_k\sim  OU\text{-}VG^1(\lambda,1/\alpha_k,\mu_k,\Sigma_{kk},\eta_{k})$.

    \section{Analytical formula for cgf of the innovation term}\label{analformsec}

    Our main result is the analytical formula for the cgf of $\bfV^*(t)$, $t>0$, the innovation term  of a multivariate OU-VG process.
    
    This formula, which is stated in Theorem \ref{innovcgf} and will later be applied under both $\PP$ and $\QQ_{\bfh}$, has 3 applications in the pricing of energy derivatives. Firstly, in the univariate and bivariate setting under $\PP$, it is used in the likelihood function to estimate the LDOUP parameters (see Section \ref{estsec}). Secondly, in the univariate setting under $\QQ_{\bfh}$, it is used to price forwards (see Section \ref{forsec}), calls (see Section \ref{callsec}), and calibrate the model (see Section \ref{calsec}). Thirdly, in the bivariate setting under $\QQ_{\bfh}$, it is used to price spread options  (see Sections \ref{spreadsec} and \ref{pricesec}).
    
    To begin, for a general $\bfX\sim OU\text{-}\bfZ(\lambda)$, $\lambda>0$, we give a lemma to deal with the finiteness of the cgf, which is a version of \cite[Lemma 4.1]{bsb04} adapted to our setting with a slightly different proof.

    \begin{lemma}\label{lemma}
        Let $\bfZ\sim L^n(\bfmu,\bfSigma,\ZZZ)$ and $t>0$. Let $\DDD_{\bfZ^*(t)} = e^{-\lambda t}\DDD_{\bfZ}$. If $\bftheta\in \DDD_{\bfZ^*(t)}$, then
        \begin{align}
            \kappa_{\bfZ^*(t)}(\bftheta)=   \int_0^{\lambda t} \kappa_{\bfZ}(e^s\bftheta)\,\rmd s \label{cfgstar}
        \end{align}
        is finite, and also $\kappa_{\bfZ}(e^s\bftheta)$ is finite for all $s\in[0,\lambda t]$.  Furthermore,  \eqref{cfgstar} holds for all  $\bftheta\in\CC^n$ such that $\name{Re}(\bftheta)\in\DDD_{\bfZ}$.
    \end{lemma}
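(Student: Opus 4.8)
The plan is to establish the integral formula \eqref{cfgstar} first for real $\bftheta$ in the stated domain, and then extend it to the complex strip by analytic continuation. The starting point is the characteristic-exponent identity \eqref{cez}, namely $\Psi_{\bfZ^*(t)}(\bftheta) = \int_0^{\lambda t}\Psi_{\bfZ}(e^u\bftheta)\,\rmd u$, which is already available from \cite[Theorem 2.2]{saya}. The cgf formula \eqref{cfgstar} is the formal analogue obtained by replacing $\rmi\bftheta$ with a real argument, so the real work is justifying that this substitution is legitimate and that all quantities involved are finite.

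\textbf{Step 1 (finiteness on the real domain).} Suppose $\bftheta\in\DDD_{\bfZ^*(t)} = e^{-\lambda t}\DDD_{\bfZ}$, so that $e^{\lambda t}\bftheta\in\DDD_{\bfZ}$. The key observation is that $\DDD_{\bfZ}$ is convex and contains $\bfnull$ (since $\kappa_{\bfZ}(\bfnull)=0$), hence it is star-shaped about the origin. Therefore for every $s\in[0,\lambda t]$ the point $e^s\bftheta$ lies on the segment between $\bfnull$ and $e^{\lambda t}\bftheta$, both of which are in $\DDD_{\bfZ}$, so $e^s\bftheta\in\DDD_{\bfZ}$ and $\kappa_{\bfZ}(e^s\bftheta)$ is finite. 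This proves the ``furthermore'' clause about $\kappa_{\bfZ}(e^s\bftheta)$ and shows the integrand in \eqref{cfgstar} is finite and, by continuity of $\kappa_{\bfZ}$ on the interior of its domain, integrable over the compact interval $[0,\lambda t]$.

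\textbf{Step 2 (the integral identity for real $\bftheta$).} To get \eqref{cfgstar} itself, I would argue that $\kappa_{\bfZ^*(t)}(\bftheta)$ is finite and equals the stated integral. One clean route is to use \eqref{statincr}, writing $\bfZ^*(t)\eqd\int_0^{\lambda t}e^u\,\rmd\bfZ(u)$, and then to appeal directly to the known stochastic-integral cumulant formula: for a L\'evy process the cgf of $\int_0^{T}f(u)\,\rmd\bfZ(u)$ is $\int_0^{T}\kappa_{\bfZ}(f(u))\,\rmd u$ whenever the right side is finite, which here reduces to \eqref{cfgstar} with $f(u)=e^u$. Finiteness of $\EE[e^{\skal{\bftheta}{\bfZ^*(t)}}]$ follows because the integral of the finite, continuous function $s\mapsto\kappa_{\bfZ}(e^s\bftheta)$ over $[0,\lambda t]$ is finite, so $\bftheta\in\DDD_{\bfZ^*(t)}$ as claimed. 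Alternatively, one can mirror \eqref{cez}: since both sides of \eqref{cez} are finite and analytic in a neighborhood where the moment condition holds, the characteristic-exponent identity extends off the imaginary axis.

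\textbf{Step 3 (analytic continuation to $\CC^n$).} For complex $\bftheta$ with $\name{Re}(\bftheta)\in\DDD_{\bfZ}$, I would show both sides of \eqref{cfgstar} are well-defined and holomorphic, and agree on an open real set, hence everywhere by the identity theorem. The left side, $\kappa_{\bfZ^*(t)}$, is holomorphic on the tube $\{\bftheta\in\CC^n:\name{Re}(\bftheta)\in\DDD_{\bfZ^*(t)}\}$ by the standard fact that an exponential moment that is finite on a real neighborhood extends to a holomorphic function on the corresponding complex tube (dominated convergence gives analyticity, since $|e^{\skal{\bftheta}{\bfZ^*(t)}}|=e^{\skal{\name{Re}(\bftheta)}{\bfZ^*(t)}}$ is integrable). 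The right side is holomorphic by differentiating under the integral sign, the integrand $\kappa_{\bfZ}(e^s\bftheta)$ being holomorphic in $\bftheta$ uniformly for $s\in[0,\lambda t]$. \textbf{The main obstacle} is precisely controlling the domain of holomorphy: I must verify that when $\name{Re}(\bftheta)\in\DDD_{\bfZ}$, the path $s\mapsto\name{Re}(e^s\bftheta)=e^s\name{Re}(\bftheta)$ stays inside $\DDD_{\bfZ}$ for all $s\in[0,\lambda t]$ (again via star-shapedness), so that the integrand's real part lies in the domain of finiteness and the moment bound $|e^{\skal{e^s\bftheta}{\bfZ}}|\le e^{\skal{e^s\name{Re}(\bftheta)}{\bfZ}}$ keeps $\kappa_{\bfZ}(e^s\bftheta)$ finite and the integral convergent. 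Once the domains match on an open real neighborhood contained in $\DDD_{\bfZ^*(t)}\cap\{\name{Re}(\bftheta)\in\DDD_{\bfZ}\}$, the identity theorem for several complex variables closes the argument.
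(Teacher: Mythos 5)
Your Step 1 (convexity of $\DDD_{\bfZ}$, so $e^s\bftheta\in\DDD_{\bfZ}$ for all $s\in[0,\lambda t]$) and your Step 3 (analytic continuation to the tube $\{\name{Re}(\bftheta)\in\DDD_{\bfZ^*(t)}\}$) match the paper's argument. The problem is Step 2, which is where the actual content of the lemma lives and where your argument is circular. You assert that finiteness of $\EE[e^{\skal{\bftheta}{\bfZ^*(t)}}]$ ``follows because the integral of the finite, continuous function $s\mapsto\kappa_{\bfZ}(e^s\bftheta)$ over $[0,\lambda t]$ is finite,'' justified by a ``known stochastic-integral cumulant formula'' valid ``whenever the right side is finite.'' But that formula, with that finiteness condition, \emph{is} the lemma (with $f(u)=e^u$); a priori the identity \eqref{cez} is only available on the imaginary axis, and finiteness of $\int_0^{\lambda t}\kappa_{\bfZ}(e^s\bftheta)\,\rmd s$ does not by itself tell you that the random variable $\bfZ^*(t)$ has a finite exponential moment at $\bftheta$. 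Your fallback remark that the identity ``extends off the imaginary axis'' by analyticity has the same gap: to analytically continue $\Psi_{\bfZ^*(t)}$ into the tube you must first know that $\bfZ^*(t)$ has finite exponential moments there, which is exactly the claim at issue.

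The paper closes this gap by working at the level of L\'evy measures: it computes the L\'evy measure $\ZZZ^*$ of the infinitely divisible random vector $\bfZ^*(t)$ (via \cite[Lemma 17.1 and Proposition 11.10]{sato99}), and then verifies the exponential-moment criterion of \cite[Theorem 25.17]{sato99}, namely $\int_{\DD^C}e^{\skal{\bftheta}{\bfz}}\,\ZZZ^*(\rmd\bfz)<\infty$, by a change of variables and Tonelli's theorem, reducing it to $\int_0^{\lambda t}\int_{\DD^C}e^{\skal{e^u\bftheta}{\bfz}}\,\ZZZ(\rmd\bfz)\,\rmd u$, whose inner integral is finite for each $u$ by your convexity observation and continuous in $u$ by dominated convergence. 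Only then does the exponential moment theorem deliver both the finiteness of $\kappa_{\bfZ^*(t)}(\bftheta)$ and the validity of \eqref{cfgstar} off the imaginary axis. To repair your proof you either need to supply this L\'evy-measure computation, or cite a specific reference (such as the result the paper adapts, \cite[Lemma 4.1]{bsb04}) that establishes the cumulant formula for $\int_0^T f(u)\,\rmd\bfZ(u)$ under precisely the hypothesis $e^{\lambda t}\bftheta\in\DDD_{\bfZ}$ rather than under the circular hypothesis that the conclusion's right-hand side is finite.
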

    
    \begin{proof}
        The random vector $\bfZ^*(t)$ is infinitely divisible, and by using the L\'evy measure of $e^{-\lambda t} \wt \bfZ^*(t)$, where $\wt \bfZ = \bfZ\circ (\lambda I)$, given by \cite[Lemma 17.1]{sato99}, it follows that for all Borel sets $B\in\BBB(\RR^n\backslash \{\bfnull\})$, the L\'evy measure of  $\bfZ^*(t)$  is
        \begin{align*}
            \ZZZ^*(B) = \int_{\RR^n\backslash \{\bfnull\} } \int_0^{t} \lambda \eins_{e^{\lambda t}B}(e^{-\lambda s}\bfz)  \, \rmd s   \ZZZ(\rmd \bfz)
        \end{align*} 
        using \cite[Propsotion 11.10]{sato99}.        
        
        Let  $ \bftheta \in \DDD_{\bfZ^*(t)}$, or equivalently $e^{\lambda t}\bftheta\in\DDD_{\bfZ}$.  By the change of variables $u = \lambda (t-s)$ and Tonelli's theorem,
        \begin{align}
            J:= \int_{\DD^C} e^{\skal{\bftheta}{\bfz}} \,  \ZZZ^*(\rmd \bfz) &= \int_{\DD^C} \int_0^{\lambda t} e^{\skal{\bftheta}{e^{u}\bfz}} \, \rmd u   \ZZZ(\rmd \bfz) \nonumber\\
            &=  \int_0^{\lambda t}  \int_{\DD^C} e^{ \skal{e^{u}\bftheta}{\bfz}} \,   \ZZZ(\rmd \bfz) \rmd u.\label{explode}
        \end{align} 
        By the exponential moment theorem (see \cite[Theorem 25.17]{sato99}),  $\DDD_\bfZ$ is a convex set containing $\bfnull$, so that  $e^{\lambda t} \bftheta\in\DDD_{\bfZ}$ implies $e^{ u} \bftheta\in\DDD_{\bfZ}$ for all $u \in[0,\lambda t]$, and hence
        \begin{align*}
            f(u):= \int_{\DD^C} e^{ \skal{e^{u}\bftheta}{\bfz}} \,   \ZZZ(\rmd \bfz)<\infty
        \end{align*}
        for all $u \in[0,\lambda t]$. Next, by a version of the dominated convergence theorem (see \cite[Lemma 16.1]{Ba01}) using $\bfz \mapsto 1+e^{\skal{e^{\lambda t}\bftheta}{\bfz}}$ as the dominating function and the definition of the L\'evy measure for $\ZZZ$, it follows that $f$ is continuous on $[0,\lambda T]$. Thus, $J<\infty$, so that $\kappa_{\bfZ^*(t)}(\bftheta)$ is finite for all $\bftheta\in\DDD_{\bfZ^*(t)}$ by the exponential moment theorem.

        Lastly,  by infinite divisibility, \eqref{cez} has L\'evy-Khintchine  representation that can now be analytically continued by the exponential moment theorem so that  \eqref{cfgstar} holds for all  $\bftheta\in \CC^n$ such that $\name{Re}(\bftheta) \in \DDD_{\bfZ^*(t)}$.
    \end{proof}
    
    \begin{remark}
        Note that $\bftheta\in \DDD_{\bfZ^*(t)}$ is equivalent to $e^{\lambda t}\bftheta\in\DDD_{\bfZ}$. If this holds, then $\kappa_{\bfZ^*(t)}(\bftheta)$ is finite. Following  the proof and considering \eqref{explode}, we see that if  $e^{\lambda t}\bftheta\notin \name {cl}({\DDD_{\bfZ}})$, where $\name{cl}$ is the closure, then $\kappa_{\bfZ^*(t)}(\bftheta)$ is not finite, but if $e^{\lambda t}\bftheta\in \partial({\DDD_{\bfZ}})$, where $\partial$ is the boundary, it is unclear whether $\kappa_{\bfZ^*(t)}(\bftheta)$ is finite or not, as only one endpoint of the integrand is infinite. Accordingly, all sets of the form $\DDD_{\bfZ^*(t)}$ should be understood in the sense that $\DDD_{\bfZ^*(t)}\subseteq \{\bftheta\in\RR^n:\EE[e^{\skal{\bftheta}{\bfZ^*(t)}}] <\infty \}$. \halmos
    \end{remark}

    Let  $\name{Li}_2$ denote the dilogarithm function \cite{zagier}, which is defined by
    \begin{align*}
        \operatorname{Li_2}(z) = \int_0^z -\frac{\log(1-y)}{y}\,\rmd y,
    \end{align*}
    where $z\in\CC\backslash [1,\infty)$. Provided that $z\leq1$,  $\name{Li}_2(z)\in\RR$ with $\name{Li}_2(1)=\pi^2/6$. Now we give the analytical formula for the cgf of $\bfV^*(t)$ using a method inspired by \cite{sab2020}.

    \begin{theorem}\label{innovcgf}   Let $\bfV\sim VG^n(b,\bfmu,\bfSigma)$ and $t>0$. Let   $\DDD_{\bfV^*(t)} = e^{-\lambda t}\DDD_{\bfV}$. If $ \bftheta\in \DDD_{\bfV^*(t)}$, then
        \begin{align}
            { \kappa_{\bfV^*(t)}(\bftheta)=}\begin{cases}
                \begin{aligned}
                    & b[\name{Li}_2(e^{\lambda t} A(\bftheta)) - \name{Li}_2( A(\bftheta)) \\
                    & \phantom{b[}+ \name{Li}_2(-e^{\lambda t} B(\bftheta)) - \name{Li}_2( -B(\bftheta)) ],
                \end{aligned}  & \text{if $\|\bftheta\|^2_\bfSigma >0, $}  \\
                b(\name{Li}_2(e^{\lambda t} \wt A(\bftheta)) - \name{Li}_2( \wt A(\bftheta)), & \text{if $\|\bftheta\|^2_\bfSigma =0 $,} \end{cases}\label{cfgvgg} 
        \end{align}
        where
        \begin{align*}
            A(\bftheta) = \frac{ \sqrt{ \skal{\bfmu} {\bftheta}^2+ 2b  \|\bftheta\|^2_\bfSigma} + \skal{\bfmu} {\bftheta}  }{2b},   \quad B(\bftheta) = \frac{  \sqrt{ \skal{\bfmu} {\bftheta}^2+ 2b  \|\bftheta\|^2_\bfSigma} - \skal{\bfmu} {\bftheta} }{2b},
        \end{align*}
        and   $\wt A(\bftheta) = \skal{\bfmu}{\bftheta}/b$. Furthermore, \eqref{cfgvgg} holds for all $\bftheta\in \CC^n$ such that $\name{Re}(\bftheta) \in \DDD_{\bfZ^*(t)}$.
    \end{theorem}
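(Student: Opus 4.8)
The plan is to start from Lemma \ref{lemma}, which for $\bftheta \in \DDD_{\bfV^*(t)}$ reduces the cgf to the one-dimensional integral $\kappa_{\bfV^*(t)}(\bftheta) = \int_0^{\lambda t} \kappa_{\bfV}(e^s\bftheta)\,\rmd s$. Substituting the VG cgf \eqref{vgcgf} with $\bfeta = \bfnull$ and writing $p := \skal{\bfmu}{\bftheta}$ and $q := \|\bftheta\|^2_\bfSigma$, this becomes $-b \int_0^{\lambda t} \log\!\left(1 - e^s p/b - e^{2s} q/(2b)\right)\rmd s$. The change of variables $y = e^s$ then turns it into $-b \int_1^{e^{\lambda t}} \log\!\left(1 - (p/b) y - (q/(2b)) y^2\right)\frac{\rmd y}{y}$, converting the problem into integrating $\log(\text{quadratic})/y$ over $[1, e^{\lambda t}]$.

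Next I would factor the quadratic in $y$. Writing $1 - (p/b)y - (q/(2b))y^2 = (1 - A(\bftheta) y)(1 + B(\bftheta) y)$ and matching coefficients gives the sum/product relations $A - B = p/b$ and $AB = q/(2b)$, whose solution is exactly the stated $A(\bftheta), B(\bftheta)$, the discriminant being $p^2 + 2bq$; when $q = 0$ the quadratic degenerates to the single linear factor $1 - \wt A(\bftheta) y$ with $\wt A = p/b$, which produces the second case. Splitting the logarithm and using the antiderivative $\int \frac{\log(1-w)}{w}\,\rmd w = -\operatorname{Li}_2(w)$, the substitutions $w = A(\bftheta)y$ and $w = -B(\bftheta)y$ evaluate the two pieces as $b[\operatorname{Li}_2(e^{\lambda t}A) - \operatorname{Li}_2(A)]$ and $b[\operatorname{Li}_2(-e^{\lambda t}B) - \operatorname{Li}_2(-B)]$ respectively, and summing yields \eqref{cfgvgg}.

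Before invoking the dilogarithm I would check that the branch cut $[1,\infty)$ is never touched along the path. For $\bftheta \in \DDD_{\bfV^*(t)}$ the proof of Lemma \ref{lemma} shows $e^s\bftheta \in \DDD_{\bfV}$ for all $s \in [0,\lambda t]$, so the quadratic (which equals $K_{e^s\bftheta}$) stays strictly positive on $[1, e^{\lambda t}]$; since $\bfSigma$ is positive semidefinite we have $q \ge 0$, hence $A, B \ge 0$, and positivity of $K$ at $y = e^{\lambda t}$ (together with $1 + e^{\lambda t}B > 0$) forces $e^{\lambda t}A < 1$. Thus all four arguments $A, e^{\lambda t}A, -B, -e^{\lambda t}B$ lie in $(-\infty, 1)$, where $\operatorname{Li}_2$ is real-analytic, so each substitution is legitimate and \eqref{cfgvgg} holds for real $\bftheta$.

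Finally, for the extension to complex $\bftheta$ with $\operatorname{Re}(\bftheta) \in \DDD_{\bfV^*(t)}$, I would argue that both sides are holomorphic on this tube domain and agree on the real slice. The left side extends holomorphically by the analytic continuation already granted in the last sentence of Lemma \ref{lemma}, while the right side extends because $\sqrt{p^2 + 2bq}$ and the maps $A, B$ continue holomorphically there and, by continuity from the real slice, the $\operatorname{Li}_2$ arguments stay off $[1,\infty)$. Agreement on the totally real slice $\DDD_{\bfV^*(t)}$ then forces agreement on the connected tube by the identity theorem. I expect this last step — verifying that a single branch of the square root can be chosen consistently and that the dilogarithm arguments never cross the cut as $\bftheta$ moves off the reals — to be the main obstacle, whereas the factorization and the dilogarithm integration are routine once the substitutions are set up.
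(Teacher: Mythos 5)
Your proposal is correct and follows essentially the same route as the paper's proof: apply Lemma \ref{lemma} to reduce to $-b\int_1^{e^{\lambda t}}\log M(u)\,\rmd u/u$ after the substitution $u=e^s$, factor $M(u)=(1-A(\bftheta)u)(1+B(\bftheta)u)$, integrate each logarithm via the dilogarithm antiderivative, and verify that all four arguments of $\operatorname{Li}_2$ lie in $(-\infty,1)$ using the positivity of $M$ on $[1,e^{\lambda t}]$ guaranteed by the lemma. Your treatment of the complex extension via the identity theorem on the tube domain is slightly more explicit than the paper's, which simply invokes the analytic continuation already established in Lemma \ref{lemma}, but the substance is the same.
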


    \begin{proof}  Let  $\bftheta\in \DDD_{\bfV^*(t)}$. Consider the $\|\bftheta\|_{\bfSigma}^2>0$ case. Let $ A:=  A(\bftheta)$, $B:= B(\bftheta)$, $c:=e^{\lambda t}$. Notice that
        \begin{align*}
            M(u):= 1- \frac{\skal{\bfmu} {\bftheta}}{b} u   -\frac { \|\bftheta\|^2_\bfSigma}{2b}u^2 = (1-Au)(1+Bu).
        \end{align*}

        By Lemma \ref{lemma}, $ \kappa_{\bfV^*(t)}(\bftheta)$ is finite and
        \begin{align}
            \kappa_{\bfV^*(t)}(\bftheta)   &=-b \int_1^{c} \frac{ \log  M(u)}{u}  \, \rmd u \label{thmeq1}\\
            & =-b \int_1^{c} \frac{ \log (1-Au)} {u} +\frac{ \log (1+Bu)} {u}   \, \rmd u  \label{thmeq2}\\
            & =b \int_A^{cA }- \frac{ \log (1-y)} {y}\, \rmd y  +\int_{-B}^{-cB}-\frac{ \log (1-y)} {y}   \, \rmd y,  \label{thmeq3}
        \end{align}
        which gives \eqref{cfgvgg}.

        We now check that the above expressions are defined. Firstly, according to Lemma \ref{lemma}, $M(e^s) >0 $ for $s\in[0,\lambda t]$, or equivalently $M(u) >0$ for $u\in[1,c]$, so \eqref{thmeq1} is defined. The coefficient of $u^2$ in $M(u)$ satisfies  $AB=\|\bftheta\|_{\bfSigma}^2/(2b)>0$. Clearly $A>0$, so we must have $B>0$. Now either  both $1-Au$ and $1+Bu$ are strictly positive for all $u\in[1,c]$, or strictly negative. However,  $1+Bu < 0 $ is impossible since $u < -1/B< 0$ contradicts $u\geq 1$. Thus, we must have  $1-Au> 0 $ and $1+Bu> 0 $ for  all $u\in[1,c]$, so \eqref{thmeq2} is defined.  Observe that $-cB<  B\leq A< cA<1$, where the last inequality is implied by $1-Au> 0 $ on $u\in[1,c]$.  Thus, we have ensured \eqref{thmeq3} can be evaluated using $\name{Li}_2$ on the subset of the domain where the function is real.
        
        Finally, the $\|\bftheta\|_{\bfSigma}^2=0$ case follows from similar arguments by noting that, with $\wt A:= \wt A(\bftheta)$, we have  $M(u) = 1-\wt A u >0$ on $u\in[1,c]$ and $c \wt A<1$.
    \end{proof}
    
    \begin{remark}
        If $\bftheta=\bfnull$, then $\kappa_{\bfV^*(t)}(\bftheta)=0$. If the covariance matrix $\bfSigma$ is invertible, or equivalently positive definite, and $\bftheta\neq\bfnull$, then only the $\|\bftheta\|^2_\bfSigma >0$ case in \eqref{cfgvgg} holds, and the other case is impossible. Also, $\kappa_{\bfV^*(0)}(\bftheta)=0$, $\bftheta \in\DDD_{\bfV^*(0)}=\RR^n$. Finally, as $\bfnull\in\DDD_{\bfV^*(t)}$, the last part of Theorem \ref{innovcgf} gives the characteristic exponent $\Psi_{\bfV^*(t)}(\bftheta) =  \kappa_{\bfV^*(t)}(\rmi \bftheta)$, $\bftheta\in\RR^n$.  \halmos
    \end{remark}

    \begin{corollary}
        Let $V\sim VG^1(b,\mu,\Sigma)$, $\Sigma>0$, and $t>0$.  Let $\DDD_{V^*(t)} = e^{-\lambda t}\DDD_{V}$. If  $\theta\in \DDD_{V^*(t)}$, then 
        \begin{align}
            \kappa_{V^*(t)}(\theta)= {}& b[\name{Li}_2(e^{\lambda t} A(\theta)) - \name{Li}_2( A(\theta)) + \name{Li}_2(-e^{\lambda t} B(\theta)) - \name{Li}_2( -B(\theta)) ], \label{cgf1d}
        \end{align}
        where
        \begin{align}
            A(\theta) =\frac{ \sqrt{\mu^2+ 2b\Sigma}}{2b}|\theta| + \frac{\mu \theta}{2b}, \quad   B(\theta)  = \frac{ \sqrt{\mu^2+ 2b\Sigma}}{2b}|\theta| - \frac{\mu \theta}{2b}.\label{ab}
        \end{align}
        
    \end{corollary}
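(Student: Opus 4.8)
The plan is to obtain the Corollary as the direct specialization of Theorem \ref{innovcgf} to dimension $n=1$, so essentially no new analytic work is required; the content reduces to substituting the one-dimensional forms of the inner product and the $\bfSigma$-norm into the general formula \eqref{cfgvgg}. First I would set $n=1$, writing $\bfmu=\mu$, $\bfSigma=\Sigma$, $\bftheta=\theta$, and record the two scalar reductions $\skal{\bfmu}{\bftheta}=\mu\theta$ and $\|\bftheta\|^2_{\bfSigma}=\skal{\theta}{\Sigma\theta}=\Sigma\theta^2$. With these, the radicand appearing in $A(\bftheta)$ and $B(\bftheta)$ becomes $\skal{\bfmu}{\bftheta}^2+2b\|\bftheta\|^2_\bfSigma=\mu^2\theta^2+2b\Sigma\theta^2=\theta^2(\mu^2+2b\Sigma)$. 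The domain identity $\DDD_{V^*(t)}=e^{-\lambda t}\DDD_V$ and the finiteness of $\kappa_{V^*(t)}$ on it are inherited verbatim from Lemma \ref{lemma} and Theorem \ref{innovcgf}, as is the analytic continuation into $\CC$, so those need only be cited.

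Next I would resolve the case distinction in \eqref{cfgvgg}. Since $\Sigma>0$, the $1\times1$ matrix $\Sigma$ is positive definite, so by the Remark following Theorem \ref{innovcgf} the degenerate branch $\|\bftheta\|^2_\bfSigma=0$ cannot occur for $\theta\neq 0$; only the first branch applies, and for $\theta=0$ every $\name{Li}_2$ argument vanishes, giving $\kappa_{V^*(t)}(0)=0$ consistently with both branches. Thus it suffices to verify the first branch formula \eqref{cgf1d}, which has exactly the same structure as the first case of \eqref{cfgvgg}, leaving only the identification of $A(\theta)$ and $B(\theta)$ to check.

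The one point demanding care — and the only place where the one-dimensional statement differs cosmetically from a naive substitution — is extracting the square root: because $b>0$ and $\Sigma>0$ force $\mu^2+2b\Sigma>0$, one has $\sqrt{\theta^2(\mu^2+2b\Sigma)}=|\theta|\sqrt{\mu^2+2b\Sigma}$, and it is this $|\theta|$ (rather than $\theta$) that must be carried through. Substituting then yields $A(\theta)=\bigl(|\theta|\sqrt{\mu^2+2b\Sigma}+\mu\theta\bigr)/(2b)$ and $B(\theta)=\bigl(|\theta|\sqrt{\mu^2+2b\Sigma}-\mu\theta\bigr)/(2b)$, which are precisely the expressions in \eqref{ab} after separating the two terms. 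This completes the reduction; the main (modest) obstacle is simply not to drop the absolute value when passing from the multivariate radicand to the scalar one, since the single coordinate $\theta$ ranges over all of $\RR$ and can be negative.
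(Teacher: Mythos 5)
Your proposal is correct and matches the paper's (implicit) argument: the corollary is stated as an immediate specialization of Theorem \ref{innovcgf} to $n=1$, and the only substantive point is exactly the one you isolate, namely that $\sqrt{\theta^2(\mu^2+2b\Sigma)}=|\theta|\sqrt{\mu^2+2b\Sigma}$ produces the $|\theta|$ in \eqref{ab} (a point the paper itself emphasizes in Remark \ref{cgfrem} when contrasting with Sabino's formula). Your handling of the case distinction via $\Sigma>0$ is likewise consistent with the remark following the theorem.
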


    \begin{remark} \label{cgfrem}

        The formula for $\kappa_{V^*(t)}(\theta)$ in univariate case was already obtained in Sabino \cite[Equation (18)]{sab2020}. 
         Sabino's proof is based on the well-known fact that the VG process $V\eqd G^+-G^{-}$  is the difference of two independent gamma subordinators since its  characteristic exponent is $\Psi_V(\theta)  = \Psi_{G^+}(\theta) + \Psi_{G^-}(-\theta)  $, where
        \begin{align*}
            \Psi_{G^{\pm}}(\theta)  = -b\log\left( 1- \rmi \frac{\sqrt{\mu^2 + 2b\Sigma } \theta \pm \mu \theta }{2b}  \right), \quad \theta\in\RR.
        \end{align*}
        This is why Sabino's formula has $|\theta|$ in \eqref{ab} replaced with  $\theta$ but is otherwise the same.   Numerically, this  causes the arguments of $\name{Li}_2$ to be different despite the value  of $\kappa_{V^*(t)}(\theta)$ being the same.
        
        As noted in \cite[Remark 7]{Lu21},  Sabino's proof  did not seem to readily extend as it has no  multivariate analog. Specifically, the multivariate VG process $\bfV$  cannot be written as a difference of multivariate gamma processes as it is not in the class of generalized gamma convolutions provided that $\bfSigma$ is invertible.    This makes  Theorem \ref{innovcgf} quite surprising. In fact, using methods similar to the proof allows us to write $\Psi_\bfV(\bftheta)  = \Psi_{\bf G^+}(\bftheta) + \Psi_{\bf G^-}(-\bftheta) $, where
        \begin{align*}
            \Psi_{\bfG^{\pm}}(\bftheta)  = -b\log\left( 1 - \rmi  \frac{   \sqrt{ \skal{\bfmu} {\bftheta}^2+ 2b  \|\bftheta\|^2_{\bfSigma}} \pm \skal{\bfmu} {\bftheta} }{2b}\right),  \quad \bftheta\in\RR^n,
        \end{align*}
        which raises the question of whether $\bfV$ can be written as the difference of two other L\'evy processes $\bfG^{\pm}$ with characteristic exponents  $\Psi_{\bfG^{\pm}}(\bftheta) $. Unfortunately, the answer is no as $\Psi_{\bfG^{\pm}}(\bftheta) $ is not a valid characteristic exponent. If it were, then the first marginal component $G_1^{+}$ would have characteristic exponent 
        \begin{align*}
            \Psi_{G_1^{+}}(\theta_1)  =  -b\log\left( 1 -  \rmi \frac{\sqrt{\mu_1^2 + 2b\Sigma_{11} } |\theta_1| + \mu_1\theta_1  }{2b}        \right),\quad \theta_1\in\RR,
        \end{align*}
        but $ \Phi_{G_1^{+}}(\theta_1) $ fails to be a characteristic function because it can fail to satisfy the necessary condition $\Phi_{G_1^{+}}(-\theta_1)=\overline{\Phi_{G_1^{+}}(\theta_1)}$. \halmos
    \end{remark}

    \section{Energy derivatives}\label{energydsec}

    In this section, we specify our energy market model using a LDOUP, show how a risk-neutral measure can be obtained by the Esscher transform, and how forward contracts, call options and spread options can be priced.

    \subsection{Energy market model} \label{procsec}
    
    We define a market model on  $(\Omega,\FFF, \FF,\PP)$ with  finite time horizon $T$, filtration $\FF=(\FFF_t)_{t\in [0,T]}$,  and real world measure $\PP$.  
    
    Let $\bfLambda= (\bfLambda(t))_{t\in[0,T]} =(\Lambda_1,\dots,\Lambda_n)$  be a deterministic  \emph{seasonality function}. 
    Let 
    \begin{align*}
        \bfX\sim OU\text{-}VGC^n_m(\lambda,\bfb,\bfmu,\bfSigma,\bfeta)
    \end{align*}
    with initial value $\bfX_0$  in its stationary distribution. Thus, the BDLP is $\bfZ\sim VGC^n_m(\bfb,\bfmu,\bfSigma,\bfeta)$, and we work within this more general setting throughout, which includes $\bfZ\sim WVAG^n(a,\bfalpha,\allowbreak \bfmu,\bfSigma,\bfeta)$ as a subclass. In Section \ref{modelparsec}, we make an additional assumption on $\bfeta$, which is not needed now.

    Recalling \eqref{priceproc}, the $n$-dimensional energy spot price process $\bfS = (\bfS(t))_{t\in[0,T]}$ is defined by the log price process
    \begin{align}
        \bfY(t):= \log \bfS(t) = \boldsymbol{\Lambda}(t) +\bfX(t).\label{basiceqn}
    \end{align}
    We assume $\bfZ$, or equivalently $\bfS$, is adapted to $\FF$. Also, define the  log return over time interval $[s,t]$ as
    \begin{align*}
        \bfR(s,t):= \bfY(t) - \bfY(s).
    \end{align*}    
    In this model, $\bfS$ is not a traded asset. The only traded asset is  the risk-free asset, which has constant risk-free rate $r$.
    
    This is known as a geometric model and ensures that prices are positive. In contrast, an arithmetic model would replace \eqref{basiceqn} with $ \bfS(t) = \boldsymbol{\Lambda}(t) +\bfX(t)$. While electricity often exhibits negative intraday prices, derivative pricing is usually  based on average daily price data (for example, see \cite[Section 3]{bs14}, \cite[Section 3]{bsb04}, \cite[Section 9.3.2]{benthbook}), which are almost always positive.  However, in Section \ref{realimpsec}, we have a real data example with negative prices and discuss a trick that allows a geometric model to still be used.

    \subsection{Esscher transform}

    We now use the Esscher transform to obtain an EMM.  For references on Esscher transforms and its applications to derivative pricing  in equity markets, see \cite{gs94, hub}, \cite[Section 13.5]{pasc}, \cite[Section VII.3c]{shir}, and for applications to energy markets, see \cite{benthbook}.
    
    Let $\bfZ\sim L^n$ be a L\'evy process. The \emph{Esscher measure} $\QQ_{\bfh}$ of $\bfZ$ with \emph{market price of risk (MPR)} (or Esscher parameter) $\bfh=(h_1,\dots, h_n)$  is  defined by the  Radon-Nikodym derivative
    \begin{align}
        {\left.\frac{\rmd \QQ_{\bfh}}{\rmd \PP}\right|}_{\FFF_t} =\frac{e^{\skal{\bfh}{\bfZ(t)}}} {\EE[e^{\skal{\bfh}{\bfZ(t)}}]}.\label{esscher}
    \end{align}
    It is immediate that $\QQ_{\bfh}$ exists if and only if  $\bfh\in  \DDD_\bfZ$ and it is equivalent to $\PP$. The Esscher transform  is structure preserving in the sense that $\bfZ$ is also a L\'evy process under $\QQ_\bfh$.

    The no arbitrage condition that the discounted traded asset prices are $\QQ_{\bfh}$-martingales trivially holds for the risk-free asset, which as we recall is the only traded asset, so that the Esssher transform  provides up to infinitely many  EMMs $\QQ_{\bfh}$ indexed by $\bfh\in  \DDD_\bfZ$, all consistent with no arbitrage. The market chooses a MPR $\bfh\in\DDD_{\bfZ}$ based on risk preferences, which we infer by calibrating the model to observed  forward or call prices.

    We now give some results on the $\QQ_{\bfh}$-dynamics of the BDLP $\bfZ$, which by equivalence determines the $\QQ_{\bfh}$-dynamics of $\bfS$. We begin by showing how the parameters of the multivariate VG process change under the Esscher transform. While the univariate case can be found in \cite[Proposition~4]{hub}, the multivariate case does not appear to be readily available except in  \cite[Theorem 6.3.1]{ant}, and we reproduce this proof below. However, it is also an implication of \cite[Theorem 2.16]{BKMS16}.

    \begin{lemma}\label{vgthm}
        Suppose $\bfV\sim VG^n(b,\bfmu,\bfSigma)$ under $\PP$. If $\bfh\in\DDD_{\bfV}$, then $\bfV \sim VG^n(b_h,\allowbreak \bfmu_\bfh,\bfSigma_\bfh)$ under $\QQ_\bfh$, where
        \begin{align}
            b_h = b, \quad \bfmu_h = \frac{\bfmu + \bfSigma \bf h}{K_\bfh}, \quad   \bfSigma_\bfh = \frac{\bfSigma }{K_\bfh},\label{vgespar}
        \end{align}
        and    $ K_\bfh$is defined in \eqref{defnm}.
    \end{lemma}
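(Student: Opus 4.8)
The plan is to compute the cumulant generating function of $\bfV$ under $\QQ_\bfh$ directly from the definition \eqref{esscher} and to recognize the result as the cgf of a VG process with the stated parameters. The starting point is the elementary fact that the Esscher transform shifts the cgf: evaluating \eqref{esscher} at $t=1$ gives
\begin{align*}
\EE^{\QQ_\bfh}[e^{\skal{\bftheta}{\bfV(1)}}] = \frac{\EE^\PP[e^{\skal{\bftheta+\bfh}{\bfV(1)}}]}{\EE^\PP[e^{\skal{\bfh}{\bfV(1)}}]},
\end{align*}
so that $\kappa_\bfV^{\QQ_\bfh}(\bftheta) = \kappa_\bfV(\bftheta+\bfh) - \kappa_\bfV(\bfh)$ whenever $\bftheta+\bfh\in\DDD_\bfV$. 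Here the hypothesis $\bfh\in\DDD_\bfV$ is exactly what makes the denominator finite and $\QQ_\bfh$ well defined.

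Next I would substitute the VG cgf \eqref{vgcgf} with $\bfeta=\bfnull$, obtaining
\begin{align*}
\kappa_\bfV^{\QQ_\bfh}(\bftheta) = -b\log K_{\bftheta+\bfh} + b\log K_\bfh = -b\log\frac{K_{\bftheta+\bfh}}{K_\bfh}.
\end{align*}
The crux is the algebraic identity for this ratio. Expanding $K_{\bftheta+\bfh}$ from \eqref{defnm} and using the symmetry of $\bfSigma$ to write $\|\bftheta+\bfh\|^2_\bfSigma = \|\bftheta\|^2_\bfSigma + 2\skal{\bftheta}{\bfSigma\bfh} + \|\bfh\|^2_\bfSigma$, together with $\skal{\bfmu}{\bftheta+\bfh} = \skal{\bfmu}{\bftheta}+\skal{\bfmu}{\bfh}$, the terms free of $\bftheta$ collect into exactly $K_\bfh$, leaving
\begin{align*}
K_{\bftheta+\bfh} = K_\bfh - \frac{\skal{\bftheta}{\bfmu+\bfSigma\bfh}}{b} - \frac{\|\bftheta\|^2_\bfSigma}{2b}.
\end{align*}
Dividing by $K_\bfh>0$ and comparing against \eqref{defnm} shows that $K_{\bftheta+\bfh}/K_\bfh$ is precisely the $K$-function of a VG process with rate $b$, drift $(\bfmu+\bfSigma\bfh)/K_\bfh$, and covariance $\bfSigma/K_\bfh$, which yields \eqref{vgespar}.

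To conclude, I would observe that $\bfSigma_\bfh=\bfSigma/K_\bfh$ is positive semidefinite, being a positive multiple of $\bfSigma$, and $b_h=b>0$, so these are admissible VG parameters; moreover the finiteness domains on the two sides of the identity agree because $K_\bfh>0$. Since the Esscher transform preserves the L\'evy property, as recalled just after \eqref{esscher}, the fact that $\kappa_\bfV^{\QQ_\bfh}$ coincides with the cgf of $VG^n(b_h,\bfmu_\bfh,\bfSigma_\bfh)$ determines the $\QQ_\bfh$-law of the whole process by uniqueness of the cgf (equivalently of the characteristic function, obtained by analytic continuation as in Lemma \ref{lemma}). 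I do not anticipate a genuine obstacle here: the only point requiring care is the bookkeeping of the quadratic cross term, which rests on the symmetry of $\bfSigma$, and the verification that the two finiteness domains match.
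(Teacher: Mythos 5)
Your proposal is correct and follows essentially the same route as the paper: both compute $\kappa_\bfV^{\QQ_\bfh}(\bftheta)=\kappa_\bfV(\bftheta+\bfh)-\kappa_\bfV(\bfh)$ from the Radon--Nikodym derivative \eqref{esscher}, expand the quadratic form using the symmetry of $\bfSigma$ so that the $\bftheta$-free terms collect into $K_\bfh$, and identify the resulting ratio with the moment generating function of $VG^n(b_h,\bfmu_\bfh,\bfSigma_\bfh)$ on a neighborhood of the origin. The only cosmetic difference is that you work with the cgf and the normalized $K$-functions while the paper writes out the unnormalized ratio raised to the power $b$; the content is identical.
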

    \begin{proof}
        Let $\bfh \in \DDD_{\bfV}$, then there always exists an open neighborhood $U\subseteq\RR^{n}$ about $\bfnull$ such that if $\bftheta\in U$, then  $\bftheta+\bfh\in\DDD_{\bfV}$.  The moment generating function of $\bfV$ under $\QQ_\bfh$ is
        \begin{align*}
            M_\bfV^{\QQ_\bfh}(\bftheta) 
            &=     \EE_{\QQ_\bfh}\left[ e^{\skal{\bftheta}{\bfV(1)}} \frac{e^{\skal{\bfh}{\bfV(1)}} }{\EE [e^{\skal{\bfh}{\bfV(1)}}] }\right]\\
            &=\frac{e^{\kappa_\bfV(\bftheta+\bfh)}} {e^{ \kappa_\bfV(\bftheta)}}\\
            & = {\left(\frac{b - \skal{\bfmu}{\bfh} -  \frac 12  \|\bfh\|^2_{\bfSigma} }{ b - \skal{\bfmu}{\bfh} -  \frac 12  \|\bfh\|^2_{\bfSigma}  - \skal{\bfmu+\bfSigma \bfh}{\bftheta} -  \frac 12  \|\bftheta\|^2_{\bfSigma} }\right)}^b \\
            & ={ \left(\frac{b_h  }{ b_h - \skal{\bfmu_{\bfh}}{\bftheta} -  \frac 12  \|\bftheta\|^2_{\bfSigma_{\bfh}}  }\right)}^b
        \end{align*}
        using \eqref{esscher} in the first line, \eqref{vgcgf} in the third line, and \eqref{vgespar} in the last line. As this coincides with the moment generating function of $VG^n(b_h,\allowbreak \bfmu_\bfh,\bfSigma_\bfh)$ on $U$, the result follows.
    \end{proof}

    Note that $\DDD_{\bfV}$ is defined  in \eqref{defnd}  using the parameters $(b,\bfmu,\bfSigma)$ under $\PP$. This is always the case, including similarly for the proceeding results.
    
    We see that the existence  of the Esscher measure $\QQ_{\bfh}$ depends importantly on $\bfh$ and the kurtosis parameter $b>0$. For any choice of $(b,\bfmu,\bfSigma)$, if $\|\bfh\|$ is sufficiently small, then   $K_\bfh>0$, which is equivalent to $\bfh\in\DDD_{\bfV}$ and $\QQ_{\bfh}$ existing. Alternatively, for any choice of $\bfh$, if $b$ is sufficiently large, so that the kurtosis is sufficiently small, then  $K_\bfh>0$. Otherwise, for an arbitrary choice of $(b,\bfmu,\bfSigma)$ and $\bfh$, $\QQ_{\bfh}$ may not exist.

    Let $p_\bfh$ denote the map sending the $\PP$ parameter to the $\QQ_\bfh$ parameters for the VG process, so  $p_\bfh(b,\mu,\bfSigma) := (b_\bfh,\bfmu_\bfh,\bfSigma_\bfh).$

    \begin{corollary}\label{svgn}
        Suppose $\bfZ\sim VGC^n_m(\bfb,\bfmu,\bfSigma,\bfeta)$ under $\PP$ is written  as 
        \begin{align*}
            \bfZ(t)  =\bfeta t + \bfV_1(t)+\dots+\bfV_m(t),\quad t \geq0.
        \end{align*}
        from \eqref{VGCpar}.   If $\bfh \in\DDD_{\bfZ}$, then this holds under $\QQ_\bfh$ with $\bfZ\sim VGC^n_m(\bfb_\bfh,\bfmu_\bfh,\bfSigma_\bfh,\bfeta)$, where $\bfV_j\sim  VG^n ([\bfb_\bfh]_j,[\bfmu_\bfh]_j,[\bfSigma_\bfh]_j)$, $ j=1,\dots,m$, are independent, and
        \begin{align*}
            ([\bfb_\bfh]_j,[\bfmu_\bfh]_j,[\bfSigma_\bfh]_j) =  p_\bfh(b_j,\bfmu_j,\bfSigma_j),
        \end{align*}
        and
        \begin{align}
            \DDD_{\bfZ} = \bigcap_{j=1}^m \DDD_{\bfV_j}.\label{dint}
        \end{align}
    \end{corollary}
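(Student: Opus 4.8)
The plan is to reduce everything to the single-factor statement in Lemma~\ref{vgthm} by exploiting the fact that the Esscher density factorizes across the independent summands $\bfV_1,\dots,\bfV_m$. Two things must be checked: that the domain identity \eqref{dint} holds, and that under $\QQ_\bfh$ the summands remain independent with each $\bfV_j$ transforming exactly as prescribed by Lemma~\ref{vgthm}.

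First I would establish \eqref{dint}. Since $\bfZ(1)=\bfeta+\bfV_1(1)+\dots+\bfV_m(1)$ with the $\bfV_j(1)$ independent under $\PP$, the moment generating function factorizes as $\EE[e^{\skal{\bftheta}{\bfZ(1)}}]=e^{\skal{\bfeta}{\bftheta}}\prod_{j=1}^m \EE[e^{\skal{\bftheta}{\bfV_j(1)}}]$. Each factor is strictly positive and the deterministic factor $e^{\skal{\bfeta}{\bftheta}}$ is always finite, so the product is finite if and only if every factor $\EE[e^{\skal{\bftheta}{\bfV_j(1)}}]$ is finite; that is, $\bftheta\in\DDD_\bfZ$ if and only if $\bftheta\in\DDD_{\bfV_j}$ for all $j$, which is \eqref{dint}. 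In particular $\DDD_\bfZ$ is open, being a finite intersection of the open sets $\DDD_{\bfV_j}$ from \eqref{defnd}, and $\bfh\in\DDD_\bfZ$ yields $\bfh\in\DDD_{\bfV_j}$ for each $j$, so that each single-factor Esscher transform in Lemma~\ref{vgthm} is well defined.

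Next I would pin down the joint $\QQ_\bfh$-law of $(\bfV_1,\dots,\bfV_m)$. Writing $\bfZ(t)=\bfeta t+\sum_j \bfV_j(t)$ in \eqref{esscher}, the deterministic drift cancels between numerator and denominator and the density splits as a product, ${\rmd\QQ_\bfh}/{\rmd\PP}|_{\FFF_t}=\prod_{j=1}^m e^{\skal{\bfh}{\bfV_j(t)}}/\EE[e^{\skal{\bfh}{\bfV_j(t)}}]$. Testing the joint moment generating function at distinct vectors $\bftheta_1,\dots,\bftheta_m$ close enough to $\bfnull$ that $\bftheta_j+\bfh\in\DDD_{\bfV_j}$, and using $\PP$-independence, I would obtain
\begin{align*}
\EE_{\QQ_\bfh}\Big[e^{\sum_{j}\skal{\bftheta_j}{\bfV_j(1)}}\Big]=\prod_{j=1}^m \frac{e^{\kappa_{\bfV_j}(\bftheta_j+\bfh)}}{e^{\kappa_{\bfV_j}(\bfh)}}=\prod_{j=1}^m M_{\bfV_j}^{\QQ_\bfh}(\bftheta_j).
\end{align*}
The factorization over $j$ shows the $\bfV_j$ are independent under $\QQ_\bfh$, while each factor is exactly the single-VG Esscher moment generating function computed in the proof of Lemma~\ref{vgthm}; hence $\bfV_j\sim VG^n(p_\bfh(b_j,\bfmu_j,\Sigma_j))$ under $\QQ_\bfh$. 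Because the Esscher transform preserves the L\'evy property and leaves the deterministic drift $\bfeta I$ unchanged, $\bfZ\eqd\bfeta I+\bfV_1+\dots+\bfV_m$ continues to hold under $\QQ_\bfh$, giving $\bfZ\sim VGC^n_m(\bfb_\bfh,\bfmu_\bfh,\bfSigma_\bfh,\bfeta)$ as claimed.

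The routine parts are the domain identity and the single-factor transforms, both of which follow directly from Lemma~\ref{vgthm} and the product formula for independent sums. The one genuinely substantive point---and the step I would treat with most care---is the preservation of independence under the new measure: it is not enough to Esscher-transform each marginal process separately, since the change of measure is defined jointly through $\bfZ$. The factorization of the Radon--Nikodym density, verified on an open neighborhood of the origin where all the relevant moment generating functions are simultaneously finite, is precisely what certifies that the joint $\QQ_\bfh$-law is a product law, and this is the crux of the argument.
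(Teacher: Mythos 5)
Your proposal is correct and follows essentially the same route as the paper's proof: both establish \eqref{dint} from the factorization of the exponential moment over the independent summands, and both verify $\QQ_\bfh$-independence by showing the joint moment generating (equivalently, cumulant generating) function factorizes as $\prod_{j=1}^m e^{\kappa_{\bfV_j}(\bftheta_j+\bfh)-\kappa_{\bfV_j}(\bfh)}$ on a neighborhood of the origin, before applying Lemma \ref{vgthm} factor by factor. The paper phrases the independence step via Kac's theorem and reduces to $\bfeta=\bfnull$ without loss of generality, but the substance is identical to your argument.
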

    \begin{proof} Without loss of generality, suppose $\bfeta=\bfnull$. Let $\bfh\in \DDD_{\bfZ}$.
        Combining \cite[Section VII.2c, Theorem 1]{shir}, Kac's theorem, and that $\bfV_1,\dots,\bfV_m$ are independent under $\PP$, it follows that they are also independent under $\QQ_{\bfh}$.  Then applying Lemma  \ref{vgthm} to each $\bfV_1,\dots,\bfV_m$ yields its law under under $\QQ_{\bfh}$.

        Note that \eqref{dint} holds since $\EE[e^{\skal{\bftheta}{ \bfV_1+\dots+\bfV_m}}]=\prod_{j=1}^m\EE[e^{\skal{\bftheta}{\bfV_j}}]$ with the LHS finite if and only if each term in the product is, otherwise both sides are infinite.  \end{proof}

    \begin{corollary}\label{vgn}
        Suppose $\bfZ\sim WVAG^n(a,\bfalpha,\bfmu,\bfSigma,\bfeta)$ under $\PP$ is written as
        \begin{align*}
            \bfZ(t) = \bfeta t + \bfV_0(t) + (V_1(t),\dots,V_n(t)),\quad t\geq0,
        \end{align*}
        from \eqref{wvagpropb}.  If  $\bfh \in\DDD_{\bfZ}$, then  this holds under $\QQ_\bfh$, where $\bfV_0 \sim VG^n(a, \bfmu_\bfh,\bfSigma_\bfh)$, $V_k\sim  VG^1(\beta_k, \mu_{h_k}, \Sigma_{h_k})$, $k=1,\dots, n$, are independent, and
        \begin{align}
            \begin{split}
                \bfmu_\bfh &= a\frac{\bfmu\tr\bfalpha + (\bfSigma\tr\bfalpha) \bf h}{K_\bfh},\quad \bfSigma_\bfh = a\frac{\bfSigma\tr\bfalpha }{K_\bfh},\\
                K_\bfh &= 1-{\skal{\bfmu\tr\bfalpha}\bfh}-\frac{\|\bfh\|^2_{\bfSigma\tr\bfalpha}}{2},   \end{split} \label{mvparh}\\
            \begin{split}
                \mu_{h_k} &= \alpha_k\beta_k\frac{\mu_k + \Sigma_{kk} h_k}{K_{h_k}},\quad \Sigma_{h_k} = \alpha_k\beta_k\frac{\Sigma_{kk} }{K_{h_k}},\\
                K_{h_k} &= 1-\alpha_k\mu_kh_k -   \frac{\alpha_k\Sigma_{kk}h_k^2}{2},\label{uvparh}
            \end{split}
        \end{align}
        and
        \begin{align*}
            \DDD_{\bfZ} =  \DDD_{\bfV_0} \cap \bigcap_{i=1}^n \DDD'_{V_k},
        \end{align*}
        $\DDD'_{V_k} := \{\bftheta =(\theta_1,\dots,\theta_n) \in \RR^n:  \theta_k\in \DDD_{V_k}\}$.
    \end{corollary}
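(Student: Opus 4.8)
The plan is to recognize the WVAG process as a special case of the VGC process and invoke Corollary \ref{svgn} directly. By the representation \eqref{wvagpropb} together with \eqref{wvagVGC1}--\eqref{wvagVGC3}, we have $\bfZ \sim VGC^n_{n+1}(\bfb_S,\bfmu_S,\bfSigma_S,\bfeta)$, where the first VG component is $\bfV_0\sim VG^n(a,a\bfmu\tr\bfalpha,a\bfSigma\tr\bfalpha)$ and, for $k=1,\dots,n$, the $(k+1)$th component is the univariate $V_k\sim VG^1(\beta_k,\alpha_k\beta_k\mu_k,\alpha_k\beta_k\Sigma_{kk})$ embedded in the $k$th coordinate of $\RR^n$, i.e.\ $VG^n(\beta_k,\alpha_k\beta_k\mu_k\bfe_k,\alpha_k\beta_k\Sigma_{kk}\bfe_k\bfe_k')$. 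Since $\bfh\in\DDD_{\bfZ}$, Corollary \ref{svgn} immediately guarantees that the independence and VGC structure are preserved under $\QQ_\bfh$, with each component's parameters transformed by the map $p_\bfh$.

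The next step is to compute $p_\bfh$ explicitly for each component. For $\bfV_0$, I would substitute $(b,\bfmu_{V_0},\bfSigma_{V_0})=(a,a\bfmu\tr\bfalpha,a\bfSigma\tr\bfalpha)$ into the formulas \eqref{vgespar} of Lemma \ref{vgthm}; cancelling the common factor $a$ yields exactly the parameters in \eqref{mvparh}, with the shape $a$ unchanged and the denominator simplifying to $K_\bfh=1-\skal{\bfmu\tr\bfalpha}{\bfh}-\|\bfh\|^2_{\bfSigma\tr\bfalpha}/2$. For the $k$th univariate component the key observation is that $\bfe_k\bfe_k'\bfh=h_k\bfe_k$, so the Esscher transform responds only to $h_k$; substituting into \eqref{vgespar} collapses everything to the univariate transform, and factoring out $\alpha_k\beta_k$ produces the parameters $\mu_{h_k}$, $\Sigma_{h_k}$ and $K_{h_k}$ of \eqref{uvparh}, confirming the transformed law is $VG^1(\beta_k,\mu_{h_k},\Sigma_{h_k})$ sitting in the $k$th coordinate.

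For the domain I would evaluate $\DDD_{\bfV_j}$ for each component using the definition \eqref{defnd}. For the embedded univariate component in coordinate $k$, a short calculation using $\skal{\alpha_k\beta_k\mu_k\bfe_k}{\bftheta}=\alpha_k\beta_k\mu_k\theta_k$ and $\|\bftheta\|^2_{\alpha_k\beta_k\Sigma_{kk}\bfe_k\bfe_k'}=\alpha_k\beta_k\Sigma_{kk}\theta_k^2$ shows that $K_\bftheta$ depends only on $\theta_k$, so $\DDD_{\bfV_j}$ is precisely the coordinate slab $\DDD'_{V_k}$. Then \eqref{dint} from Corollary \ref{svgn} gives $\DDD_{\bfZ}=\DDD_{\bfV_0}\cap\bigcap_{k=1}^n\DDD'_{V_k}$, as claimed.

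I do not anticipate a genuine obstacle, since the statement is a corollary of machinery already in place. The only point requiring care is the bookkeeping around the embedded univariate components: one must verify both that $p_\bfh$ degenerates to the univariate transform (via $\bfe_k\bfe_k'\bfh=h_k\bfe_k$) and that the associated domain degenerates to a slab rather than a bounded set, which is exactly what produces the $\DDD'_{V_k}$ appearing in the intersection rather than a fully $n$-dimensional region.
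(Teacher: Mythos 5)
Your proposal is correct and follows the same route as the paper: the paper's proof is simply the observation that the result follows from Corollary \ref{svgn} applied to the representation $\bfZ\sim VGC^n_{n+1}(\bfb_S,\bfmu_S,\bfSigma_S,\bfeta)$ from \eqref{wvagVGC1}--\eqref{wvagVGC3}, with the explicit parameter and domain computations you carry out left implicit. Your bookkeeping for $p_\bfh$ on each component and for the coordinate-slab domains $\DDD'_{V_k}$ is accurate and consistent with \eqref{mvparh}--\eqref{uvparh}.
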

    \begin{proof}
        This follows from Corollary \ref{svgn} with the representation $\bfZ\sim {VGC}^n_{n+1}$  from \eqref{wvagVGC1}--\eqref{wvagVGC3}.
    \end{proof}

    \begin{remark}\label{notwvagrem} From Corollary \ref{vgn}, the $k$th marginal component has law  
        \begin{align}
            Z_k(t) = \eta_k t + {[\bfV_0]}_k(t) + V_k(t),\quad t\geq 0,\label{match0}
        \end{align}
        under $\QQ_\bfh$, where 
        \begin{align}
            {[\bfV_0]}_k \sim VG^1(a, {[\bfmu_\bfh]}_k,{[\bfSigma_\bfh]}_{kk}), \quad V_k\sim  VG^1(\beta_k, \mu_{h_k}, \Sigma_{h_k}) ,\label{match}
        \end{align}
        are independent. Since $Z_1$ is a sum of 2 independent univariate VG processes, it is in general not a univariate VG process.

        One way to see this is to take the example $n=2$, $a=0.5$, $\bfalpha=(1,1)$, $\bfmu=\bfeta=\bfnull$, $\bfSigma$ as the identity matrix, and $\bfh=(1,0.5)\in\DDD_{\bfZ}$. Assume for the purpose of contradiction that $Z_1\sim VG^1(b,\mu,\Sigma)$ under $\QQ_{\bfh}$ for some $(b,\mu,\Sigma)$, which can be obtained by matching the first 3 central moments with those obtained from  \eqref{match}. Then $\nu:=1/b$ can be found in closed form by solving a quadratic equation, giving $\nu =   1.03788$ or $ 3.30906$, where only the former satisfies the parameter constraints including $\Sigma>0$, so that $\name{Kurt}_{\QQ_{\bfh}}(Z_1)=8.97664$. On the other hand, \eqref{match} implies $\name{Kurt}_{\QQ_{\bfh}}(Z_1)=9.07618$, a contradiction.  Thus, since a WVAG process must have VG marginal components, the class of WVAG processes is not closed under the Esscher transform.
        
        Relatedly, it should be noted that  the $\QQ_{h_1}$-dynamics of $Z_1$ from the Esscher transform of $Z_1$ is not necessarily the same as the $\QQ_{\bfh}$-dynamics of $Z_1$ from the Esscher transform of $\bfZ$. For example, if $Z_1\sim VG^1$ under $\PP$, then when $n=1$,  $Z_1\sim VG^1$ under $\QQ_{h_1}$, but this may be false when $n\geq 2$ since $h_2,\dots, h_n$ generally, not just $h_1$, affects the law of $Z_1$  under $\QQ_{\bfh}$.  \halmos
    \end{remark}

    \subsection{Analytical formula for cgf of  the log return}
    
    Using the above results on the Esscher transform, we now obtain analytical formulas for the cgf of the log return in  terms of the analytical formula in Theorem \ref{innovcgf}.
    
    Let the time to maturity be $\tau := T-t$. We write the energy price process in the form $ \bfS(T) = \bfS(t) \exp(\bfR(t,T))$, where the log return over $[t,T]$ is
    \begin{align*}
        \bfR(t,T)= \bfLambda(T) + \bfX(t)e^{-\lambda \tau }-\bfY(t)+\bfZ^*({t,T})e^{-\lambda \tau} 
    \end{align*}
    and $\bfX(t)= \bfLambda(t)-\log\bfS(t)$.     Using fact that $\bfZ^*(t,T)$ is independent of $\FFF_t$, and \eqref{statincr}, the conditional distribution $\bfR(t,T)\given \FFF_{t} $ satisfies 
    \begin{align}
        \bfR(t,T)\given \FFF_t &\eqd   \bfLambda(T) + \bfX(t)e^{-\lambda \tau }-\bfY(t)+\bfZ^*e^{-\lambda \tau} \label{logrettau}
    \end{align}
    for a random vector $\bfZ^*\eqd \bfZ^*(\tau)$ independent of $\FFF_t$.   Let $\kappa_{\bfR(t,T)}^{\QQ_\bfh}(\bftheta):=\linebreak \kappa_{\bfR(t,T)\given \FFF_t}^{\QQ_\bfh}(\bftheta)$ denote the cgf of this conditional distribution.

     For a L\'evy process $\bfZ\sim L^n$, recall the notation $\DDD_{\bfZ}^{\QQ_{\bfh}}:= \{\bftheta \in\RR^n: \linebreak\EE_{\QQ_{\bfh}}[e^{\skal{\bftheta}{\bfZ(1)}}]<\infty\}$. It follows from \cite[Theorem 13.66]{pasc} that $\bftheta\in\DDD_{\bfZ}^{\QQ_{\bfh}}$ is equivalent to $\bftheta+\bfh\in\DDD_{\bfZ}$, which expresses the domain of  $\kappa_{\bfZ}^{\QQ_\bfh}$ under $\QQ_\bfh$ in terms of the domain $\DDD_{\bfZ}$ under $\PP$, which is the approach we mainly use. Alternatively, from Corollary \ref{svgn}, noting that $\bfZ\sim VGC^n_m$ is closed under Esscher transform but with different parameters under $\PP$ and $\QQ_{\bfh}$,  $\DDD_{\bfZ}^{\QQ_{\bfh}}$ can be understood as $\DDD_{\bfZ}$ but with the $\PP$-parameters replaced by the $\QQ_{\bfh}$-parameters. 
   
    \begin{proposition} \label{svgmultichar}     Using the setting and notation of  Corollary \ref{svgn}, suppose $\bfZ \sim VGC^n_m(\bfb,\bfmu,\bfSigma,\bfeta)$ under $\PP$ and $\bfh\in\DDD_{\bfZ}$.  Then
        \begin{align}
            \kappa_{\bfR(t,T)}^{\QQ_\bfh}(\bftheta) = \skal{\bfLambda(T) + \bfX(t) e^{-\lambda \tau} -\bfY(t)} {\bftheta}  +\kappa_{\bfZ^*(\tau)}^{\QQ_\bfh} (e^{-\lambda\tau}\bftheta)\label{eqc}
        \end{align}
        for $\bftheta+\bfh \in \DDD_{\bfZ}$, where
        \begin{align}
            \kappa_{\bfZ^*(\tau)}^{\QQ_\bfh} (\bftheta) =   \skal{(e^{\lambda\tau}-1  )\bfeta}{\bftheta} + \sum_{j=1}^m \kappa_{\bfV_j^*(\tau)}^{\QQ_\bfh} (\bftheta),\label{zstarsvg}
        \end{align}
        and the law of $\bfV_j$, $j=1,\dots,m$, under $\QQ_{\bfh}$  is given in   Corollary \ref{svgn}, and $ \kappa_{\bfV_j^*(\tau)}^{\QQ_\bfh}$ is evaluated by  \eqref{cfgvgg}. Furthermore,  \eqref{eqc} holds for $\bftheta\in\CC^n$ such that $\name{Re}(\bftheta)+\bfh\in\DDD_{\bfZ}$. 
    \end{proposition}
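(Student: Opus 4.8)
The plan is to reduce the conditional cgf to the affine structure of the log return in \eqref{logrettau} and then exploit the additive decomposition of $\bfZ^*(\tau)$ supplied by Corollary \ref{svgn}. First I would record that the Esscher transform preserves the L\'evy property, so that under $\QQ_\bfh$ the increment $\bfZ^*(t,T)$ is still independent of $\FFF_t$ and satisfies $\bfZ^*(t,T)\eqd\bfZ^*(\tau)$ via \eqref{statincr}; hence the representation \eqref{logrettau} remains valid under $\QQ_\bfh$. Conditionally on $\FFF_t$, the vector $\bfc:=\bfLambda(T)+\bfX(t)e^{-\lambda\tau}-\bfY(t)$ is deterministic and the only randomness is the scaled copy $e^{-\lambda\tau}\bfZ^*$ with $\bfZ^*\eqd\bfZ^*(\tau)$. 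Applying the elementary identity $\kappa_{\bfc+s\bfW}(\bftheta)=\skal{\bfc}{\bftheta}+\kappa_{\bfW}(s\bftheta)$ for a constant $\bfc$ and scalar $s=e^{-\lambda\tau}$ then gives \eqref{eqc} at once.

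For \eqref{zstarsvg}, I would work under $\QQ_\bfh$, where Corollary \ref{svgn} writes $\bfZ\eqd\bfeta I+\bfV_1+\dots+\bfV_m$ with the $\bfV_j$ mutually $\QQ_\bfh$-independent VG processes whose transformed parameters are listed there. The decomposition identity \eqref{wvaginnovce}, valid for any splitting of the BDLP into a drift and independent L\'evy summands, applies verbatim under $\QQ_\bfh$: the deterministic drift $\bfeta I$ integrates to $(e^{\lambda\tau}-1)\bfeta$ and hence produces the affine term in \eqref{zstarsvg}, while each $\bfV_j$ contributes $\kappa_{\bfV_j^*(\tau)}^{\QQ_\bfh}$; passing from characteristic exponents to cgfs through the L\'evy--Khintchine representation and finiteness established in Lemma \ref{lemma} yields \eqref{zstarsvg}. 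Each summand is then evaluated in closed form by applying Theorem \ref{innovcgf} to $\bfV_j$ with its $\QQ_\bfh$ VG parameters, that is, through \eqref{cfgvgg}.

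The step needing the most care, and the one I expect to be the main obstacle, is the finiteness and analytic-continuation bookkeeping, because all domains must be tracked through the Esscher change of measure. For real $\bftheta$ I would invoke Lemma \ref{lemma} under $\QQ_\bfh$, so that $\kappa_{\bfZ^*(\tau)}^{\QQ_\bfh}$ is finite exactly on $e^{-\lambda\tau}$ times the $\QQ_\bfh$-finiteness domain of $\bfZ$, which in turn factors over the independent summands by \eqref{dint} and the product form of exponential moments; the content here is that the $\QQ_\bfh$-domain is the $\PP$-domain $\DDD_\bfZ$ shifted by the Esscher parameter $\bfh$, and this shift must be reconciled with the set stated in the result. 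For complex $\bftheta$, both Lemma \ref{lemma} and Theorem \ref{innovcgf} already furnish analytic continuation of each $\kappa_{\bfV_j^*(\tau)}^{\QQ_\bfh}$ to $\bftheta\in\CC^n$ with $\name{Re}(\bftheta)$ in the relevant domain; since \eqref{eqc}--\eqref{zstarsvg} are finite sums of such analytic functions together with an entire linear term, the identity extends to complex $\bftheta$ by analytic continuation, completing the proof.
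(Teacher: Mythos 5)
Your proof follows essentially the same route as the paper's: reduce via \eqref{logrettau} to the cgf of $\bfZ^*(\tau)$ evaluated at $e^{-\lambda\tau}\bftheta$, decompose using Corollary \ref{svgn} together with \eqref{wvaginnovce}, evaluate each summand by Theorem \ref{innovcgf}, and extend to complex arguments by the exponential moment theorem. The domain subtlety you flag --- that the $\QQ_\bfh$-finiteness domain of $\bfZ$ is $\DDD_\bfZ$ shifted by $\bfh$ rather than $\DDD_\bfZ$ itself, given the paper's convention that domains are always defined from the $\PP$-parameters --- is a genuine point, but the paper's own proof simply asserts finiteness from $\bftheta\in\DDD_\bfZ$ without addressing it, so your treatment is if anything the more careful one.
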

    
    \begin{proof}
        By \eqref{logrettau}, we have
        \begin{align*}
            \EE_{\QQ_h}[e^{\skal{\bftheta}{\bfR(t,T)}}\given \FFF_t]  =    e^{  \skal{\bfLambda(T) + \bfX(t)e^{-\lambda \tau }-\bfY(t)}{\bftheta} }        \EE_{\QQ_h}[e^{\skal{e^{-\lambda \tau }\bftheta }{\bfZ^*(\tau)}}],
        \end{align*} 
        which is finite since the assumption $\bftheta+\bfh \in \DDD_{\bfZ}$ is equivalent to $\bftheta\in\DDD_{\bfZ}^{\QQ_\bfh}$, which is equivalent to $e^{-\lambda \tau}\bftheta\in\DDD_{\bfZ^*(\tau)}^{\QQ_\bfh}$. Thus, we obtain \eqref{eqc}. Next, \eqref{zstarsvg} follows from   Corollary \ref{svgn}, which gives the law of $\bfZ$ under $\QQ_{\bfh}$, and \eqref{wvaginnovce} which gives the cgf of the corresponding $\bfZ^*(\tau)$. Lastly, $\kappa_{\bfR(t,T)}^{\QQ_\bfh}$ can be analytically  continued by the exponential moment theorem  for all $\bftheta\in \CC^n$ such that $\name{Re}(\bftheta) +\bfh\in \DDD_{\bfZ}$.
    \end{proof}
    \begin{corollary} \label{wvagcor}  Using the setting and notation of  Corollary \ref{vgn}, suppose $\bfZ\sim WVAG^n(a,\bfalpha,\bfmu,\bfSigma,\bfeta)$ under $\PP$ and $\bfh\in\DDD_{\bfZ}$.  Then  $\kappa_{\bfR(t,T)}^{\QQ_\bfh}(\bftheta)$ follows \eqref{eqc} for $\bftheta +\bfh \in\DDD_{\bfZ}$,
        where
        \begin{align*}
            \kappa_{\bfZ^*(\tau)}^{\QQ_\bfh} (\bftheta) =    \skal{(e^{\lambda\tau}-1  )\bfeta}{\bftheta} + \kappa_{\bfV_0^*(\tau)}^{\QQ_\bfh} (\bftheta) + \sum_{i=1}^n \kappa_{V_k^*(\tau )}^{\QQ_\bfh} (\theta_k)
        \end{align*}
        where the law of $\bfV_0$, $V_k$, $k=1,\dots,n$, under $\QQ_{\bfh}$ as given in   Corollary \ref{vgn}, and $\kappa_{\bfV_0^*(\tau)}^{\QQ_\bfh}$, $\kappa_{V_k^*(\tau)}^{\QQ_\bfh} $ are evaluated by  \eqref{cfgvgg}. Furthermore,  \eqref{eqc} holds for $\bftheta\in\CC^n$ such that $\name{Re}(\bftheta)+\bfh\in\DDD_{\bfZ}$.
    \end{corollary}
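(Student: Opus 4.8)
The plan is to recognize Corollary \ref{wvagcor} as a direct specialization of Proposition \ref{svgmultichar} obtained through the VGC representation of the WVAG process. First I would invoke the fact, recorded in \eqref{wvagVGC1}--\eqref{wvagVGC3}, that $\bfZ \sim WVAG^n(a,\bfalpha,\bfmu,\bfSigma,\bfeta)$ is precisely a $\bfZ \sim VGC^n_{n+1}(\bfb_S,\bfmu_S,\bfSigma_S,\bfeta)$, so the hypotheses of Proposition \ref{svgmultichar} are met with $m = n+1$. Applying that proposition immediately yields \eqref{eqc} together with the decomposition of $\kappa^{\QQ_\bfh}_{\bfZ^*(\tau)}$ into the drift term $\rmi\skal{(e^{\lambda\tau}-1)\bfeta}{\bftheta}$ plus a sum of $n+1$ innovation cgfs, one per VG component, each of which is evaluated by \eqref{cfgvgg}.

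The second step is to separate the $j=0$ summand, coming from the common factor $\bfV_0 \sim VG^n(a,\bfmu_\bfh,\bfSigma_\bfh)$ under $\QQ_\bfh$ (as in Corollary \ref{vgn}), from the $n$ summands coming from the idiosyncratic factors $V_1,\dots,V_n$. The key observation is that each idiosyncratic component has both its drift $\alpha_k\beta_k\mu_k\bfe_k$ and its covariance $\alpha_k\beta_k\Sigma_{kk}\bfe_k\bfe_k'$ supported on the $k$th coordinate axis, and by \eqref{uvparh} this one-dimensional support is preserved under the Esscher transform, since $V_k \sim VG^1(\beta_k,\mu_{h_k},\Sigma_{h_k})$ lives on the $k$th axis under $\QQ_\bfh$. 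Consequently, when the multivariate formula \eqref{cfgvgg} is evaluated for $\bfV_k$, the quantities $\skal{\bfmu}{\bftheta}$ and $\|\bftheta\|^2_{\bfSigma}$ entering $A(\bftheta)$ and $B(\bftheta)$ collapse to $\mu_{h_k}\theta_k$ and $\Sigma_{h_k}\theta_k^2$ respectively, so the summand depends on $\bftheta$ only through $\theta_k$ and coincides with the univariate cgf $\kappa^{\QQ_\bfh}_{V_k^*(\tau)}(\theta_k)$ given by \eqref{cgf1d}--\eqref{ab}.

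Finally, the domain statement and the analytic continuation are inherited directly: $\DDD_\bfZ$ is the intersection described in Corollary \ref{vgn}, and the extension of \eqref{eqc} to $\bftheta\in\CC^n$ with $\name{Re}(\bftheta)\in\DDD_\bfZ$ is exactly the closing assertion of Proposition \ref{svgmultichar}. The only point requiring genuine care, and the closest thing to an obstacle, is the reduction in the second step: one must confirm that the degenerate $n$-dimensional VG law of $V_k$ really reproduces the univariate formula, in particular that $\sqrt{\theta_k^2}=|\theta_k|$ recovers the $|\theta|$ appearing in \eqref{ab} rather than a bare $\theta$. This is routine once the axis-support is tracked through the Esscher transform, but it is the step where the dimensional bookkeeping between the multivariate Proposition \ref{svgmultichar} and the univariate summands of Corollary \ref{wvagcor} must be made precise.
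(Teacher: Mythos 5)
Your proposal is correct and follows essentially the same route as the paper, whose proof is simply the one-line observation that the result follows from Proposition \ref{svgmultichar} combined with Corollary \ref{vgn} (i.e.\ the $VGC^n_{n+1}$ representation \eqref{wvagVGC1}--\eqref{wvagVGC3}). The extra care you take in checking that the axis-supported idiosyncratic components collapse the multivariate formula \eqref{cfgvgg} to the univariate one with $\sqrt{\theta_k^2}=|\theta_k|$ is sound detail the paper leaves implicit.
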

    \begin{proof}
        This follows from  Proposition \ref{svgmultichar} and  Corollary \ref{vgn}.
    \end{proof}

    \begin{proposition} \label{svgunichar}     
        Using the setting and notation of  Corollary \ref{svgn}, suppose $\bfZ \sim VGC^n_m(\bfb,\bfmu,\bfSigma,\bfeta)$ under $\PP$ and $\bfh\in\DDD_{\bfZ}$. Then $ V_{jk} := {[\bfV_j]}_k \sim VG^1( {[[\bfb_\bfh]_j]}_k, \allowbreak {[[\bfmu_\bfh]_j]}_k,  {[[\bfSigma_\bfh]_j]}_{kk})$, $j=1,\dots,m$, and 
        \begin{align}
            \kappa^{\QQ_\bfh}_{R_k(t,T)}(\theta) =   (\Lambda_k(T)+ X_k(t) e^{-\lambda \tau }-Y_k(t))\theta   +\kappa^{\QQ_\bfh}_{Z_k^*(\tau)}( e^{-\lambda \tau}\theta) \label{cgf1dQ}
        \end{align}
        for $\theta\in\RR$ such that  $\theta\bfe_k+\bfh\in\DDD_\bfZ$,  where
        \begin{align*}
            \kappa^{\QQ_\bfh}_{Z_k^*(\tau)}(\theta)  =   (e^{\lambda \tau }-1) \eta_k \theta  +\sum_{j=1}^m  \kappa^{\QQ_\bfh}_{V_{jk}^*(\tau )}(\theta),
        \end{align*}
        and $\kappa^{\QQ_\bfh}_{V_{jk}^*(\tau )}$ is evaluated by \eqref{cgf1d}.  Furthermore,  \eqref{cgf1dQ} holds for $\theta\in\CC$ such that $\name{Re}(\theta)\bfe_k+\bfh\in\DDD_{\bfZ}$.
    \end{proposition}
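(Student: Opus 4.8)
The plan is to derive this univariate statement as a direct specialization of the multivariate Proposition~\ref{svgmultichar}, obtained by evaluating the multivariate cgf along the single coordinate direction $\bfe_k$. The only substantive ingredients beyond Proposition~\ref{svgmultichar} are that taking a marginal component commutes with the Esscher-transformed VG structure and with the innovation (star) operation.

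First I would record the marginal law of each $\bfV_j$ under $\QQ_\bfh$. By Corollary~\ref{svgn}, under $\QQ_\bfh$ the processes $\bfV_1,\dots,\bfV_m$ are again independent with $\bfV_j\sim VG^n([\bfb_\bfh]_j,[\bfmu_\bfh]_j,[\bfSigma_\bfh]_j)$. Since the $k$th marginal of a $VG^n(b,\bfmu,\bfSigma)$ process is $VG^1(b,\mu_k,\Sigma_{kk})$ --- immediate from $\bfV\eqd\bfeta I+\bfB\circ(G,\dots,G)$, whose $k$th component is $\eta_k I+B_k\circ G$ with $B_k\sim BM^1(\mu_k,\Sigma_{kk})$ --- it follows that $V_{jk}:=[\bfV_j]_k\sim VG^1([[\bfb_\bfh]_j]_k,[[\bfmu_\bfh]_j]_k,[[\bfSigma_\bfh]_j]_{kk})$ under $\QQ_\bfh$, which is the first claim.

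Next I would write $R_k(t,T)=\skal{\bfe_k}{\bfR(t,T)}$, so that $\kappa^{\QQ_\bfh}_{R_k(t,T)}(\theta)=\kappa^{\QQ_\bfh}_{\bfR(t,T)}(\theta\bfe_k)$, and apply Proposition~\ref{svgmultichar} with $\bftheta=\theta\bfe_k$; the admissibility hypothesis $\theta\bfe_k\in\DDD_\bfZ$ is precisely the stated domain. The linear term of \eqref{eqc} collapses to $(\Lambda_k(T)+X_k(t)e^{-\lambda\tau}-Y_k(t))\theta$ because $\bfe_k$ extracts the $k$th entry, while the deterministic drift $\bfeta I$ contributes $(e^{\lambda\tau}-1)\bfeta$ to the innovation $\bfZ^*(\tau)$ and hence $(e^{\lambda\tau}-1)\eta_k\theta$ to its cgf. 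For the remaining sum in \eqref{zstarsvg} I would use the identity $[\bfV_j^*(\tau)]_k=V_{jk}^*(\tau)$, valid because the star map $\bfZ\mapsto\int_0^\tau e^{\lambda u}\,\rmd\bfZ(\lambda u)$ acts componentwise, so that the $k$th marginal of the innovation coincides with the innovation of the $k$th marginal. Hence $\kappa^{\QQ_\bfh}_{\bfV_j^*(\tau)}(\theta\bfe_k)=\kappa^{\QQ_\bfh}_{V_{jk}^*(\tau)}(\theta)$, and since each $V_{jk}$ is univariate VG under $\QQ_\bfh$ by the first step, this scalar cgf is evaluated through \eqref{cgf1d}. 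Assembling the linear term, the drift term, and this sum yields \eqref{cgf1dQ}.

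Finally, the analytic-continuation clause transfers directly: the last sentence of Proposition~\ref{svgmultichar} gives \eqref{eqc} for all $\bftheta\in\CC^n$ with $\name{Re}(\bftheta)\in\DDD_\bfZ$, and restricting to $\bftheta=\theta\bfe_k$ yields \eqref{cgf1dQ} for all $\theta\in\CC$ with $\name{Re}(\theta)\bfe_k\in\DDD_\bfZ$. I expect the only point needing genuine care --- the main obstacle, modest as it is --- to be the componentwise commutation $[\bfV_j^*(\tau)]_k=V_{jk}^*(\tau)$ together with the correct bookkeeping of the extracted VG parameters; once these are established the statement is a mechanical specialization of the multivariate proposition.
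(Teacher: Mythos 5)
Your proposal is correct and follows essentially the same route as the paper, whose proof is precisely the observation that $\kappa^{\QQ_\bfh}_{Z_k^*(\tau)}(\theta) = \kappa^{\QQ_\bfh}_{\bfZ^*(\tau)}(\theta\bfe_k)$ combined with the argument of Proposition \ref{svgmultichar}; you simply spell out the supporting details (the VG marginal law and the componentwise action of the star map) that the paper leaves implicit.
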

    
    \begin{proof}
        The proof follows from similar arguments as the proof of Proposition \ref{svgmultichar}, noting that $\kappa^{\QQ_\bfh}_{Z_k^*(\tau)}(\theta) = \kappa^{\QQ_\bfh}_{\bfZ^*(\tau)}(\theta\bfe_k)$. 
    \end{proof}
    
    \begin{corollary}\label{wvaglogretcgf}
        Using the setting and notation of  Corollary \ref{vgn}, suppose   $\bfZ\sim WVAG^n(a,\bfalpha,\bfmu,\bfSigma,\bfeta)$  under $\PP$ and $\bfh\in\DDD_{\bfZ}$. Then $ V_{0k} := {[\bfV_0]}_k$ and $V_k$ satisfies \eqref{match}, and $\kappa^{\QQ_\bfh}_{R_k(t,T)}(\theta)$ follows   \eqref{cgf1dQ}   for $\theta\in\RR$ such that $\theta\bfe_k+\bfh\in\DDD_\bfZ$,
        where
        \begin{align*}
            \kappa^{\QQ_\bfh}_{Z_k^*(\tau)}(\theta)  =  (e^{\lambda \tau}-1) \eta_k \theta  +  \kappa^{\QQ_\bfh}_{V_{0k}^*(\tau )}(\theta) +\kappa^{\QQ_\bfh}_{V_k^*(\tau)}(\theta),
        \end{align*}
        and $\kappa^{\QQ_\bfh}_{V_{0k}^*(\tau)}$, $\kappa^{\QQ_\bfh}_{V_k^*(\tau)}$ are evaluated by \eqref{cgf1d}. Furthermore,  \eqref{cgf1dQ} holds for $\theta\in\CC$ such that $\name{Re}(\theta)\bfe_k+\bfh\in\DDD_{\bfZ}$.

    \end{corollary}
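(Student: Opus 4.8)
The plan is to deduce this as the WVAG specialisation of Proposition \ref{svgunichar}, in exactly the way Corollary \ref{wvagcor} was obtained from Proposition \ref{svgmultichar}. First I would invoke the VGC representation \eqref{wvagVGC1}--\eqref{wvagVGC3}, which realises $\bfZ\sim WVAG^n(a,\bfalpha,\bfmu,\bfSigma,\bfeta)$ as a member $\bfZ\sim VGC^n_{n+1}(\bfb_S,\bfmu_S,\bfSigma_S,\bfeta)$ whose $n+1$ independent VG building blocks are the common component $\bfV_0$ and the idiosyncratic components $V_1\bfe_1,\dots,V_n\bfe_n$. Since $\bfh\in\DDD_\bfZ$, Corollary \ref{vgn} already records the $\QQ_\bfh$-laws of these blocks, and in particular gives the marginal laws \eqref{match} for $V_{0k}:=[\bfV_0]_k$ and $V_k$, which establishes the first assertion.

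Next I would apply Proposition \ref{svgunichar} to this $VGC^n_{n+1}$ representation. This immediately yields that $\kappa^{\QQ_\bfh}_{R_k(t,T)}(\theta)$ has the form \eqref{cgf1dQ}, with $\kappa^{\QQ_\bfh}_{Z_k^*(T)}(\theta)=(e^{\lambda T}-1)\eta_k\theta+\sum_{j=0}^{n}\kappa^{\QQ_\bfh}_{V_{jk}^*(T)}(\theta)$, the sum running over the $n+1$ building blocks (relabelled $0,1,\dots,n$) and $V_{jk}$ denoting the $k$th marginal of the $j$th block. The one substantive step is the collapse of this sum: the $k$th marginal of an idiosyncratic block $V_j\bfe_j$ is the identically zero process whenever $j\neq k$, because $[\bfe_j]_k=0$, and such a degenerate block contributes a vanishing cgf. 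Hence only two terms survive, the common block (contributing $V_{0k}=[\bfV_0]_k$) and the $k$th idiosyncratic block $V_k$, reducing the $(n+1)$-term sum to the two-term expression stated. The constituent univariate cgfs $\kappa^{\QQ_\bfh}_{V_{0k}^*(T)}$ and $\kappa^{\QQ_\bfh}_{V_k^*(T)}$ are then evaluated through \eqref{cgf1d}, and the domain restriction together with the analytic continuation to $\theta\in\CC$ with $\name{Re}(\theta)\bfe_k\in\DDD_\bfZ$ is inherited verbatim from Proposition \ref{svgunichar}.

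I do not anticipate a genuine obstacle here, as the result is a bookkeeping specialisation; the only points demanding care are verifying that the off-diagonal idiosyncratic marginals are identically zero so that they drop out of the sum, and keeping the two indexing conventions aligned, namely the VGC indexing $j=1,\dots,n+1$ used in Proposition \ref{svgunichar} versus the WVAG relabelling $\bfV_0,V_1,\dots,V_n$ used in Corollary \ref{vgn}. Once these are matched, the stated formula follows directly.
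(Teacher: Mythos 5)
Your proposal is correct and follows exactly the route of the paper, whose proof is the one-line citation of Proposition \ref{svgunichar} together with Corollary \ref{vgn}; you have simply made explicit the bookkeeping (the $VGC^n_{n+1}$ representation and the collapse of the $(n+1)$-term sum to two terms because the $k$th marginals of the off-index idiosyncratic blocks vanish) that the paper leaves implicit.
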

    \begin{proof}
        This follows from  Proposition \ref{svgunichar} and  Corollary \ref{vgn}.
    \end{proof}
    
    \begin{remark}\label{simpdcond}
        Define the interval     \begin{align*}
            D(b,\mu,\Sigma):=            \left( -  \sqrt{  \frac{ \mu^2}{\Sigma^2} + \frac{2b  }{\Sigma} }  -  \frac{ \mu}{\Sigma}  , \sqrt{  \frac{ \mu^2}{\Sigma^2} + \frac{2b  }{\Sigma} }  -  \frac{ \mu}{\Sigma}\right).
        \end{align*}
        Using the notation in Corollary \ref{vgn}, the condition $\theta\bfe_k+\bfh\in\DDD_\bfZ$  in Corollary \ref{wvaglogretcgf} can be written explicitly as an interval. Specifically, it is equivalent to $\theta\bfe_k\in\DDD_\bfZ^{\QQ_\bfh}$,  which is equivalent to
        \begin{align*}
            \theta \in D(a,{[\bfmu_\bfh]}_k,{[\bfSigma_\bfh]}_{kk})\cap D(\beta_k,\mu_{h_k},\Sigma_{h_k}).
        \end{align*}
        \halmos
    \end{remark}
    
    After all the substitutions, all the cgfs  $\kappa^{\QQ_\bfh}_{\bfR(t,T)} $, $\kappa^{\QQ_\bfh}_{R_k(t,T)} $ presented in  this section have analytical formulas in terms of the dilogarithm function.
    
    \subsection{Pricing energy derivatives}\label{derivsec}

    We now use the analytical formulas for the cgf of the log return from the previous section to  obtain pricing formulas for forwards, calls, and spread options.
    
    The current time is $t$, the maturity time of the derivative is $T$ and the time to maturity is $\tau = T-t$. Recall that $\bfZ \sim VGC^n_m(\bfb,\bfmu,\bfSigma,\bfeta)$ or $\bfZ\sim WVAG^n(a,\bfalpha,\allowbreak\bfmu,\bfSigma,\bfeta)$.

    \subsubsection{Forward prices}\label{forsec}
    
    Let $S_k$ be a univariate energy price which is  the $k$th marginal component of $\bfS$. The following result gives an analytical formula for its forward price.

    \begin{proposition}    
        Suppose that $\bfZ \sim VGC^n_m(\bfb,\bfmu,\bfSigma,\bfeta)$ or $\bfZ\sim WVAG^n(a,\bfalpha,\allowbreak\bfmu,\bfSigma,\bfeta)$   under $\PP$, and assume $\bfh,\bfe_k+\bfh \in\DDD_{\bfZ}$.  The  forward price of $S_k$ at time $t$ with maturity time $T$ is
        \begin{align}
            F_k(t,T) &=   S_k(t) \exp ({\kappa_{R_k(t,T)}^{\QQ_\bfh}(1) }), \quad t\in[0,T], \label{fwdpr}
        \end{align}
        where $\kappa_{R_k(t,T)}^{\QQ_\bfh}$ and $\DDD_{\bfZ}$ are given by Proposition \ref{svgunichar} or Corollary \ref{wvaglogretcgf}, respectively. 
    \end{proposition}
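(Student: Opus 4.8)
The plan is to derive \eqref{fwdpr} from the standard no-arbitrage characterization of the forward price as a conditional expectation under the pricing measure $\QQ_\bfh$, and then to identify that expectation with the moment generating function of the log return evaluated at the argument $\theta=1$. This reduces the whole statement to the two analytical cgf results already established.

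First I would recall the defining property of the forward price. A forward contract on $S_k$ with delivery price $F$ and maturity $T$ has time-$T$ payoff $S_k(T)-F$; as a traded derivative its discounted value is a $\QQ_\bfh$-martingale, so its value at time $t$ is $e^{-r\tau}\EE_{\QQ_\bfh}[S_k(T)-F \given \FFF_t]$. The forward price $F_t$ is by definition the delivery price making this value vanish at initiation, and since $e^{-r\tau}\neq 0$ the discount factor plays no role, leaving
\[
    F_t = \EE_{\QQ_\bfh}[S_k(T)\given \FFF_t].
\]
This representation is the appropriate one in the present energy setting, where $S$ is non-traded and the forward is the liquid instrument whose discounted price is a $\QQ_\bfh$-martingale.

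Next, since $S_k(t)$ is $\FFF_t$-measurable and $S_k(T)=S_k(t)\exp(R_k(t,T))$ by the definition $R_k(t,T)=Y_k(T)-Y_k(t)$ of the log return, I would factor out $S_k(t)$ to obtain
\[
    F_t = S_k(t)\,\EE_{\QQ_\bfh}[e^{R_k(t,T)}\given \FFF_t].
\]
The remaining conditional expectation is exactly the moment generating function of $R_k(t,T)\given\FFF_t$ at argument $\theta=1$, so by the definition $\kappa^{\QQ_\bfh}_{R_k(t,T)}(\theta)=\log\EE_{\QQ_\bfh}[e^{\theta R_k(t,T)}\given\FFF_t]$ it equals $\exp(\kappa^{\QQ_\bfh}_{R_k(t,T)}(1))$, which yields \eqref{fwdpr}. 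Here the conditional cgf is a deterministic function of the $\FFF_t$-measurable state $(S_k(t),X_k(t),Y_k(t))$, as made explicit in \eqref{cgf1dQ}.

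Finally, I must verify that $\theta=1$ lies in the domain where this cgf is finite and given by the analytical formula. By Proposition \ref{svgunichar} (for the VGC case) or Corollary \ref{wvaglogretcgf} (for the WVAG case), the expression \eqref{cgf1dQ} for $\kappa^{\QQ_\bfh}_{R_k(t,T)}$ is valid precisely when $\theta\bfe_k\in\DDD_{\bfZ}$; taking $\theta=1$ this is the standing hypothesis $\bfe_k\in\DDD_{\bfZ}$, while $\bfh\in\DDD_{\bfZ}$ guarantees the existence of $\QQ_\bfh$. The only genuinely delicate point is this integrability/domain condition $\bfe_k\in\DDD_{\bfZ}$, which is what ensures $S_k(T)$ has a finite first $\QQ_\bfh$-moment so that $F_t$ is well defined; the no-arbitrage representation of $F_t$ and the cgf identification are otherwise routine once the earlier results are invoked.
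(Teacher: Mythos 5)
Your proposal is correct and follows essentially the same route as the paper's proof: identify $F_t=\EE_{\QQ_\bfh}[S_k(T)\given\FFF_t]$, factor out $S_k(t)$, recognize the conditional expectation as $\exp(\kappa^{\QQ_\bfh}_{R_k(t,T)}(1))$, and use $\bfh\in\DDD_{\bfZ}$ for existence of $\QQ_\bfh$ and $\bfe_k\in\DDD_{\bfZ}$ for finiteness via Proposition \ref{svgunichar} or Corollary \ref{wvaglogretcgf}. The only difference is that you spell out the no-arbitrage justification of the conditional-expectation representation of the forward price, which the paper takes as given.
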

    
    \begin{proof} As   $\bfh \in\DDD_{\bfZ}$, the Esscher measure $\QQ_{\bfh}$ exists. Then the forward price is
        \begin{align*}
            F_k(t,T) =  \EE_{\QQ_{\bfh}} [S_k(T)\given \FFF_t] =   S_k(t)  \EE_{\QQ_{\bfh}} [e^{R_k(t,T)}\given \FFF_t].
        \end{align*}
        which gives the result if the expectation is finite, and this is provided by  $\bfe_k+\bfh \in\DDD_{\bfZ}$ by  Proposition \ref{svgunichar} or Corollary \ref{wvaglogretcgf}.
    \end{proof}

    Figure \ref{forwardfig} shows possible shapes of the forward curve, which may increase or decrease in the short run but tend to increase in the long run. This can be contrasted with the standard Black-Scholes model where the forward curve is $\tau\mapsto S_te^{r\tau}$ or to those in \cite[Figure 7]{BenthPir18} for wind power futures.
    
    \begin{figure}[htpb]
        \begin{center}
            \includegraphics{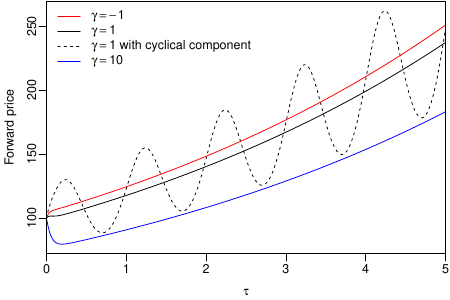}
            \caption{Forward price curve $\tau \mapsto   F_1(8,8+\tau)$, where the MPR is $\bfh = \gamma(-0.1,-0.03)$, $\gamma=-1,1,10$. All parameters are otherwise taken from Section \ref{modelparsec} but  $b_{21}=b_{31}=0$ for all curves except the one with cyclical component.}\label{forwardfig}
        \end{center}
    \end{figure}

     A \emph{base load   futures contract} on the energy price $S_k$ with \emph{contract period} (or delivery period) $(T_1,T_2]$ consisting of $n$ days $t_1,\dots,t_n\in(T_1,T_2]$ and delivery price $K$ pays 
        \begin{align*}
            f_{T_2} =   \frac{1}{n} \sum_{i=1}^nS_k(t_i ) - K
        \end{align*}
        at maturity time $T_2$. Thus,  the base load futures price  $\overline{F}_k(t,T_1,T_2)$ at time $t$, being the delivery price such that the contract have 0 value, is
        \begin{align}
            \overline{F}_k(t,T_1,T_2) =   \frac{1}{n} \sum_{i=1}^n \EE_{\QQ_{\bfh}}[S_k(t_i)\given \FFF_t],\label{futuresprice}
        \end{align}
        where
        \begin{align}
            \EE_{\QQ_{\bfh}}[S_k(t_i)\given \FFF_t] = \begin{cases}
                F_k(t,t_i) & \text{if $t<t_i$,}\\
                S_k(t) & \text{if $t\geq t_i$,}
            \end{cases}\label{futurespricessum}
        \end{align}
        with the forward price $F_k(t,t_i)$ given by \eqref{fwdpr}.

    \subsubsection{Call options}\label{callsec}
    
    Consider a call option on the univariate energy price $S_k$ with strike price $K$ and maturity time $T$.   Suppose that $\bfh, (\epsilon+1)\bfe_k+\bfh\in\DDD_{\bfZ}$, then by the Carr-Madan formula  \cite{CM99}, the  price of the call option at time $t$ is
    \begin{align}
        c_t =C \int_0^\infty \name{Re}\left( \frac{  S_k(t)^{\rmi \theta}K^{-\rmi \theta} \exp(  {\kappa^{\QQ_\bfh}_{R_k(t,T)} }(\epsilon+1 +\rmi \theta))}{ \epsilon^2+\epsilon-\theta^2 + \rmi (2\epsilon+1) \theta}\right) \rmd \theta, \quad t\in[0,T],\label{cm}
    \end{align}
    where
    \begin{align*}
        C = \frac{e^{-r(T-t)}}{\pi} K^{-\epsilon}S_k(t)^{\epsilon+1} ,
    \end{align*}
    and  $\epsilon>0$ is the damping parameter ($\alpha$ in \cite{CM99}). Here, the cgf $\kappa^{\QQ_\bfh}_{R_k(t,T)}(\theta)$ is obtained by Proposition \ref{svgunichar} or Corollary \ref{wvaglogretcgf} and its finiteness in the integrand is provided by $(\epsilon+1)\bfe_k+\bfh\in\DDD_{\bfZ}$ which is equivalent to $(\epsilon+1)\bfe_k\in\DDD_{\bfZ}^{\QQ_{\bfh}}$ . As usual, $\bfh\in\DDD_{\bfZ}$ is necessary for $\QQ_{\bfh}$ to exist. Provided these conditions hold, $\epsilon$ has no effect on the call price.    We make further comments on the choice of $\epsilon$ in Remark \ref{epsrem}.

    \begin{remark}    Note that $(\epsilon+1)\bfe_k\in\DDD_{\bfZ}^{\QQ_{\bfh}} $ is equivalent  to $\EE_{\QQ_\bfh}[S_k^{\epsilon+1}]<\infty$, which is the necessary and sufficient condition for the Carr-Madan formula to be finite. For example, if the BDLP is $\bfZ\sim WVAG^n(a,\bfalpha,\bfmu,\bfSigma,\bfeta)$ under $\PP$, then by Remark \ref{simpdcond}, this condition is equivalent to
        \begin{align*}
            \epsilon \in (0,\infty)\cap(D(a,{[\bfmu_\bfh]}_k,{[\bfSigma_\bfh]}_{kk})-1)\cap (D(\beta_k,\mu_{h_k},\Sigma_{h_k})-1).
        \end{align*}
        \halmos
    \end{remark}

    \subsubsection{Spread options}\label{spreadsec}

    Consider a spread option on the biviarate energy prices $(S_k,S_l)$, $k, l=1,\dots, n$, $k\neq l$, with strike price $K$ and maturity time $T$. By Remark \ref{notwvagrem}, if $n\geq 3$, this is a  more general setting with the possibility that the components $\bfh$ other than the $k$th and $l$th component affect the law of $(S_k,S_l)$ under $\QQ_{\bfh}$.

    Suppose that $\bfh\in\DDD_{\bfZ}$, then by the risk-neutral pricing formula, the price of the spread option at time $t$ is
    \begin{align}
        f_t((S_k,S_l)(t),K) := e^{-r(T-t)}\EE_{\QQ_{\bfh}}[(S_k(T) - S_l(T)  - K)^+\given\FFF_t],\label{spread}
    \end{align}
    $t\in[0,T]$, expressed as a function of the current energy price $(S_k,S_l)(t)$ and strike price $K$, which is possible as the LDOUP is Markov  and the option is path independent.
    Suppose also that $-\bfepsilon+\bfh\in\DDD_{\bfZ}$, then the Hurd-Zhou formula \cite{HZ10} gives  the price with strike price 1  as
    \begin{align}
        \begin{split}
            &  f_t((S_k,S_l)(t),1)= \\
            & \quad\quad \frac{e^{-rT}}{(2\pi)^2} \int_{\RR^2 } \exp(\kappa_{(R_k,R_l)(t,T)}^{\QQ_{\bfh}}(-\bfepsilon +\rmi \bftheta) )e^{\skal{-\bfepsilon+\rmi\bftheta}{ (Y_k,Y_l)(t)}} \hat P(\bftheta+ \rmi \bfepsilon)\,\rmd \bftheta,
        \end{split}\label{spreadint}
    \end{align}
    where 
    \begin{align*}
        \hat P(\bftheta) = \frac{\Gamma(-1+\rmi (\theta_k +\theta_l) ) \Gamma(-\rmi \theta_l)}{\Gamma(1+\rmi \theta_k)}, \quad \bftheta = (\theta_k, \theta_l),    \end{align*}
    $\Gamma$ is the gamma function, and $\bfepsilon = (\epsilon_1,\epsilon_2)$ are damping parameters that satisfy   $\epsilon_1 + \epsilon_2 < -1$, $\epsilon_2>0$.   Here, $\kappa^{\QQ_\bfh}_{(R_k,R_l)(t,T)}(\bftheta)$ can be obtained by  Proposition \ref{svgmultichar} or Corollary \ref{wvagcor} by taking the $k$th and $l$th marginal component if $n\geq3$, similar to Proposition \ref{svgunichar}, and its finiteness in the integrand is provided by   $-\bfepsilon+\bfh\in\DDD_{\bfZ}$, which is equivalent to $-\bfepsilon\in\DDD_{\bfZ}^{\QQ_{\bfh}}$. As before,   provided these conditions hold, $\bfepsilon$ has no effect on the spread option price.
    Then the spread option price for a general strike price $K$ is
    \begin{align}
        f_t((S_k,S_l)(t), K)  = Kf_t\left(\frac{(S_k,S_l)(t)}{K}, 1\right).\label{sprid}
    \end{align}

        \begin{remark} \label{parity}
            For simplicity, let $n=2$. If we have an energy price process $(S'_1,S'_2)$ that is not necessarily positive, for which there exists a constant $\bfc\in\RR^2$ that makes it positive such that $(S_1,S_2)(t)= (S'_2,S'_1)(t)+\bfc$, $t\geq 0$, satisfies the model, then there is a parity relation between spread option prices  on $(S'_1,S'_2)$ and $(S_1,S_2)$. 
            
            Specifically,  consider a spread option on $(S'_1,S'_2)$ with strike price $K'>0$, which pays off
            \begin{align}
                f'_T = (S'_1(T)-S'_2(T)-K')^+\label{pp}
            \end{align}
            at maturity time $T$, let  $c:=c_1+c_2$, and suppose $K:=c-K'>0$, then we have the parity relation
            \begin{align*}
                (S_1(T)-S_2(T)-K)^+ - (S'_1(T)-S'_2(T)-K')^+ = S_1(T)-S_2(T)-K.
            \end{align*}
            It follows that the spread option at time $t$ on  $(S'_1,S'_2)$  with strike price $K'$ is
            \begin{align}
                f'_t((S_1,S_2)(t), K)   =   f_t((S_1,S_2)(t), K)  -e^{-r(T-t)}(F_1(t,T) - F_2(t,T) - K),\label{parrel}
            \end{align}
            where  $f_t((S_1,S_2)(t), K)$ is the spread option price on $(S_1,S_2)$ with strike price $K$ in \eqref{sprid}, which can be computed using the Hurd-Zhou formula as mentioned above, and   $F_k(t,T)$ is the forward prices of $S_k$, $k=1,2$, computed using \eqref{fwdpr}.  To express this in terms of the  forward price of $S_k'$, which is $F'_k(t,T) = \EE_{\QQ_{\bfh}}[S'_k(T)\given\FFF_t]$,  use
            \begin{align}
                F_k(t,T) = F'_l(t,T)+c_k, \quad k\neq l. \label{futadj}
            \end{align}

            Also,  there is no loss in generality in assuming $K>0$, a similar parity can be found if $K<0$. \halmos

        \end{remark}

    \begin{remark}  Throughout Section \ref{derivsec}, the underlying for energy derivatives is the energy price as opposed to the forward or futures prices. However, both forms of underlying appear in the literature (for example, see \cite{bd15,BenthPir18,bsb04,gzlw23} for energy,  \cite{bsb06,bs14,gs22,gss22b} for forwards and futures, and \cite{benthbook,cardur03} for both). By using \eqref{fwdpr}, we can convert from a payoff in terms of forward or futures prices into one in terms of energy prices. \halmos
    \end{remark}
    
    \section{Implementation}\label{imple}

    We now outline for a specific choice of model  how the pricing formulas in Section \ref{derivsec} can be applied to simulated and real data.

    \subsection{Simulated data example}\label{simimpsec}
    
    In the simulated data example, we provide an end-to-end numerical implementation that includes simulating the model, fitting the model, and pricing spread options.

    \subsubsection{Model and parameter specification}\label{modelparsec}
    We begin by specifying the true model, in which the parameters of the first and second marginal components have been  chosen to approximately resemble the  real data example in \cite[Section 5.1]{benthbook} and \cite{bs14}, respectively.
    
    Let $n=2$ and suppose the energy spot price process $\bfS(t)$, $t\in[0,T_o]$ with terminal time $T=T_o$, is as specified in Section \ref{procsec}, where the LDOUP is $\bfX \sim {OU\text{-}WVAG}^2(\lambda,a,\alpha,\bfmu,\bfSigma,\bfeta)$ with BDLP $\bfZ\sim{WVAG}^2(a,\alpha,\bfmu,\bfSigma,\bfeta)$.
    
    Let the seasonality function be
    \begin{align*}
        \Lambda_k(t) = b_{0k} + b_{1k}t + b_{2k}\cos\left(\frac{2\pi}{p}t\right)+b_{3k}\sin\left(\frac{2\pi}{p}t\right), \quad t\in[0,T],
    \end{align*}
    $k=1,2$, where  $p=1$ is the period, and  $\bfb:= (b_{01},\dots,\allowbreak b_{31}, b_{02},\dots,b_{32})\in\RR^8$ are the \emph{seasonality parameters} with values
    \begin{align}
        (b_{01},b_{11},b_{21},b_{31})  &= (3.16132, 0.17500, 0.04385, 0.22986),\label{spars1}\\
        (b_{02},b_{12},b_{22},b_{32})  &= (  2.04056 ,    0.31390 ,    0.01257 ,    0.06587). \label{spars2}
    \end{align}
    The \emph{LDOUP parameters} are $\bfvtheta := (a,\bfalpha,\bfmu,\bfSigma,\bfeta)$ with values    
    \begin{align}
        \begin{gathered}
            \lambda = 18.25,\quad a = 3.15854,\quad\bfalpha =\begin{pmatrix}\alpha_1\\ \alpha_2\end{pmatrix} =\begin{pmatrix}0.17183\\ 0.28494\end{pmatrix}, \\ \bfmu =\begin{pmatrix}\mu_1\\ \mu_2\end{pmatrix}=\begin{pmatrix} -0.03071\\ -0.20335\end{pmatrix} , \quad \bfSigma =  \begin{pmatrix}\Sigma_{11} & \Sigma_{12}\\\Sigma_{12}  &  \Sigma_{22}  \end{pmatrix} = \begin{pmatrix}0.24598 & 0.19452\\0.19452  &  0.17045  \end{pmatrix}, 
        \end{gathered} \label{pars}
    \end{align}
    and $\bfeta =-\bfmu$. We refer to the seasonality and LDOUP parameters together as the \emph{model parameters}.     It is necessary to choose $\bfeta$ such that $\EE[\bfZ(1)] =\overline{\bfmu}=\bfnull$, which is equivalent to $\bfeta =-\bfmu$, so that the model is identifiable, otherwise both $\bfeta$ and the intercept $(b_{01}, b_{02})$ act as location parameters.  The initial value $\bfX_0$ is an independent random vector in the stationary distribution.

    The risk-free rate is $r= 0.05$.  The MPR is
    \begin{align*}
        \bfh = \begin{pmatrix}h_1\\ h_2\end{pmatrix}=\begin{pmatrix} -0.1\\ -0.03\end{pmatrix},
    \end{align*}
    which satisfies the necessary condition $\bfh \in\DDD_{\bfZ}$.
    
    The current time is  $T_e$, and $\bfS$ has been  observed in the time interval $t\in[0,T_e]$ with sampling interval $\Delta$ to estimate the model parameters. At time $T_e$, we price spread options where the strike prices are $K_i$, $i=1,\dots, 13$, which are equally spaced values from 0.4 to 10, the maturity time is $T_o>T_{e}$, and the time to maturity is $\tau = T_o-T_e$. We set $T_{e}=8$, $T_{o}=8.5$, $\Delta = \frac{1}{250}$ with time measured in years. The current energy price is $\bfS(T_e) = (100,96)$

    The Carr-Madan damping parameter for call options on  $S_1$ and $S_2$ is $\epsilon=2.5$, which satisfies the conditions $(\epsilon+1)\bfe_1+\bfh,(\epsilon+1)\bfe_2+\bfh\in\DDD_{\bfZ}$. The Hurd-Zhou damping parameter is $\bfepsilon = (\epsilon_1,\epsilon_2) = (-3.5,1)$, which satisfies the  conditions $\epsilon_1 + \epsilon_2 < -1$, $\epsilon_2>0$ and $-\bfepsilon+\bfh\in\DDD_{\bfZ}$.

    \begin{remark}\label{epsrem}
        In theory, any damping parameter satisfying the above-mention\-ed condition has no effect on the call and spread option prices. In practical terms, when numerically evaluating the pricing formulas, we check that changing to nearby damping parameters has virtually no effect, and any choice satisfying this will do. If it appears no damping parameter achieves this, then for instance, the discretization parameters of the FFT method in Section \ref{pricesec} should be adjusted to obtain a finer grid. Additional discussions on the choice of damping parameter is given in Section \ref{trueparsec1} and  \cite[Section 4.2]{AlSc2018}.\halmos
    \end{remark}

    \subsubsection{Simulation}
    The observations of the LDOUP $\bfX(t_i)$ $t_i=i\Delta$, $i=0,1,\dots, m$, where $m=\lfloor{T_e/\Delta}\rfloor=2000$ are simulated with  $\wt n =1000$ subintervals using the method in \cite[Section 4.2]{Lu21}. This means we use \eqref{sim}, where $\bfZ^*(\Delta )$ is simulated by discretizing the stochastic integral in \eqref{statincr} over $[0,\lambda \Delta] $ into $\wt n$ subintervals. Since the initial value $\bfX_0$ is assumed to be in its stationary distribution, which exists by \cite[Proposition 1(i)]{Lu21}, $\bfX_0$ is simulated by 20\% burn-in.

    \subsubsection{Estimation}\label{estsec}

    By \eqref{basiceqn}, the seasonality parameters  $(b_{0k},b_{1k},b_{2k},b_{3k})$  are estimated with linear regression on the log price $Y_k(t_i)$, $i=0,1,\dots, m$, for each marginal component $k=1,2$. The estimation is valid since the errors are a zero-mean stationary AR(1) process.  Let the estimated seasonality parameters be  $\wh \bfb$ and the estimated seasonality function be $({\wh \Lambda}_1,{\wh\Lambda}_2)$, that is $({\Lambda}_1,{\Lambda}_2)$ with $\bfb= \wh\bfb$, and let
    \begin{align}
        \wh  X_k(t)  =  Y_k(t)-  \wh\Lambda_k(t) \quad k=1,2,\label{filteredobs}
    \end{align}
    so that $(\wh X_1(t_i), \wh X_2(t_i))=\bfx_i$, $i=0,1,\dots,m$, are the filtered observations of the LDOUP used to estimate $\bfvtheta$ with the MLE method in  \cite{Lu21}.
    
    Specifically, the likelihood function we use is $\bfvtheta\mapsto L(\bfvtheta,\bfx_0,\dots,\bfx_m)$, where
    \begin{align}\label{likfn}
        L(\bfvtheta,\bfx_0,\dots,\bfx_m) = e^{mn\lambda\Delta}\prod_{k=1}^m f_{\bfZ^*(\Delta)}(e^{\lambda\Delta}\bfx_k-\bfx_{k-1}).
    \end{align}
    Here, $ f_{\bfZ^*(\Delta)}$  is the pdf of $\bfZ^*(\Delta)$, which is computed by taking the Fourier inversion of the characteristic function  $\Phi_{\bfZ^*(\Delta)}(\bftheta) = \exp(\kappa_{\bfZ^*(\Delta)}(\rmi \bftheta) )$  using a version of   Corollary \ref{wvagcor} 
    that is under $\PP$, specifically,
    \begin{align*}
        \kappa_{\bfZ^*(\Delta)}(\bftheta) =    \skal{(e^{\lambda\Delta}-1  )\bfeta}{\bftheta} + \kappa_{\bfV_0^*(\Delta)}(\bftheta) + \sum_{i=1}^n \kappa_{V_k^*(\Delta )}(\theta_k) ,\quad \bftheta \in \DDD_{\bfZ},
    \end{align*}
    where the law of $\bfV_0$, $V_k$, $k=1,\dots,n$, under $\PP$ are given in \eqref{wvagpropb}. 
    The numerical Fourier inversion method is described in \cite[Section 4]{MiSz17}. Using the notation there and in \cite[Section 5.1]{Lu21}, we set the number of grid points along each axis to be $N = 2^{12}$ for 1-dimensional FFT and $N=2^8$ for 2-dimensional FFT, and the spacing  $h_k$ is chosen such that $\bar x_k = 30 \sqrt{\myVar(Z^*_k(\Delta))}$, where $2\bar{x}_k$ is the width of the $\bfx$-grid along the $\bar{x}_k$ axis on which the pdf is outputted.  The factor $f_{\bfX(0)}(\bfx_0)$  has not been included in \eqref{likfn} as a single observation in the likelihood is negligible in practice for large sample sizes $m$, however to be more correct, \cite{Lu21} does include it and shows how it can be computed.

    To estimate $\bfvtheta$, we follow the 3-step procedure for the OU-WVAG process outlined in  \cite[Section 5.1]{Lu21}, which estimates $\lambda$ using method of moments on the lag 1 autocorrelation, the marginal parameters using MLE on the marginal components, and the joint parameters using MLE on the joint observations subject to a covariance constraint. Denote the estimated LDOUP parameters as $\wh\bfvtheta$.

    \begin{remark}
        In  \cite[Remark 9]{Lu21}, it is noted that there was no known closed-form formula for  $\Phi_{\bfZ^*(\Delta)}$ for a OU-WVAG process. We now have an analytical formula  and this substantially reduces the computational cost of MLE. While  MLE can be used to estimate all parameters jointly by \eqref{likfn} and the analytical formula, with the 3-step procedure meant to be an approximation, it turns out the latter still preforms better in practice as it allows a 1-dimensional FFT with large $N=2^{12}$ to estimate the marginal parameters accurately at a low computational cost, while the former method may be limited to smaller $N$, such as $N=2^8$, since 2-dimensional FFT has a higher computational cost. \halmos
    \end{remark}

    \subsubsection{Calibration}\label{calsec}

    Next, we estimate the MPR $\bfh$ by calibrating the model, which specifically solves
    \begin{align}
        \wh \bfh = \argmin_{\bfh\in\RR^2} \sum_{i=1}^{q}(O_i- E_i )^2,\label{optim}
    \end{align}
    where $O_i$ is the true price of the calibration instrument and $E_i$ is the predicted price. The calibration instruments are forward or call prices on $S_1$ and $S_2$ observed at the current time $T_e$.  For calibrating to the forward prices, the times to maturity, $\tau_i$, $i=1,\dots, 20$,  are equally spaced values from 0.125 to 2.5. For calibrating to the call prices, the time to maturity is $0.5$ and the strike prices, $K_i$, $i=1\,\dots,20$, are equally spaced values from 25 to 175. Note that $\bfh$ is estimated jointly over the calibration instruments on $S_1$ and $S_2$, so $q=40$.
    
    The forward and call prices are computed using \eqref{fwdpr} and \eqref{cm}, respectively, where the true price uses the true parameters $\bfb$, $\bfvtheta$ and the predicted price uses the estimated parameters  $\wh\bfb$, $\wh\bfvtheta$. To evaluate \eqref{cm}, we use numerical integration rather than  FFT as  done by \cite{CM99}. A discussion on the improved accuracy and speed of this is given in \cite[Section 15.2.3]{pasc}.
    
    We refer to estimating the model parameters $\bfb$, $\bfvtheta$ and calibrating $\bfh$ jointly as fitting the model.

    \begin{remark} \label{mcmmrem}
        Our approach estimates $\bfvtheta$ under $\PP$ and calibrates $\bfh$ under $\QQ_{\bfh}$, in contrast to the approach using the mean-correcting martingale measure, such as in Gardini and Sabino \cite{gs22}. The latter requires  the  model be specified under a risk-neutral measure $\QQ$ rather than $\PP$, and then calibrated to forward or call prices. An issue then arises with the estimation of joint parameters, such as those affecting the correlation, which should be calibrated to the prices of other multi-asset energy derivatives, but these may not be liquidly traded in the market.   To address this, one way is to calibrate the joint parameters such that the historical correlation of the log returns of $\bfS$ under $\PP$ matches that of the model correlation under $\QQ$ as a proxy. This effectively assumes that the $\PP$ and $\QQ$ correlations of the log return are equal, which is inconsistent with our model. More generally, one should expect that any change of measure can potentially change the correlation. An additional issue of this being an underdetermined problem, as there are 2 joint parameters but 1 correlation constraint, leaving 1 degree of freedom, is raised in \cite[Footnote 2]{gs22}. We avoid this method.

        In contrast, our method combines both the observed energy prices $\bfS$ under $\PP$ and univariate energy derivative prices to fit the model in a theoretically sound way. In fact, it is unclear whether our model can be estimated only by calibrating to derivative prices, as it may be the case that   $\bfvtheta$ and $\bfh$ would become unidentifiable. This issue arises if energy prices follow the simpler Black-Scholes model.  
        
        Interestingly, $h_2$ affects the distribution of $Z_1$ under $\QQ_{\bfh}$ as noted in Remark \ref{notwvagrem}, so it seems possible in principle to calibrate $\bfh$ to energy derivatives on $S_1$ only, which is not possible in the model of Gardini and Sabino.  \halmos
    \end{remark}

    \subsubsection{Pricing}\label{pricesec}

    Finally, we  discuss the FFT method for numerically evaluating the spread option pricing formula in \eqref{spreadint}, which can be applied using the true or fitted model.   While the implementation details are already given in \cite{HZ10}, and also \cite{AlSc2018}, here we provide additional explanation on how we obtain the spread option price for a panel of strike prices $K_i$, with a single FFT evaluation.
    
    Consider a 2-dimensional grid in the $\bftheta$-domain and $\bfx$-domain. This is specified by $N$, the number of grid points along each axis, and $\overline{\theta}_k$, $k=1,2$, where $2\overline{\theta}_k$ is the width of the grid along the $\theta_k$-axis. If $\overline{\theta}_1 =\overline{\theta}_2$, denote this value by $\overline{\theta}$. Then the $\bftheta$-grid and $\bfx$-grid are arrays whose entries are 
    \begin{align*}
        [\bftheta]_\bfi &=  \left( \delta_{\theta_1} \left(-\frac{N}{2} + i_1-1 \right),  \delta_{\theta_2} \left(-\frac{N}{2} + i_2-1 \right)\right), \quad \bfi = (i_1,i_2) \in \{1,\dots,N\}^2,\\
        [\bfx]_{\bfj} &=  \left( \delta_{x_1} \left(-\frac{N}{2} + j_1-1 \right),  \delta_{x_2} \left(-\frac{N}{2} + j_2 -1 \right)\right), \quad \bfj=(j_1,j_2) \in \{1,\dots,N\}^2,
    \end{align*}
    where the  $\theta_k$-spacing is $\delta_{\theta_k} = 2\overline{\theta}_k/N$ and the $x_k$-spacing is $\delta_{x_k} = 2\pi/(N\delta_{\theta_k})$ (note that $\overline{\theta}$, $\delta_{\theta}$, $\delta_{x}$ here are  $\overline{u}$, $\eta$, $\eta^*$ in \cite{HZ10}).

    At the current time $t=T_e$, for the log energy price $\log \bfS(t)=[\bfx]_{\bfj^*}$ on the $\bfx$-grid for some $\bfj^*$,  it is shown in \cite[Section 3]{HZ10} that the price of the spread option is approximated by
    \begin{align}
        f_{t}(\bfS(t),1) \approx \frac{ e^{-rT}} {(2\pi)^2}   \delta_{\theta_1}\delta_{\theta_2}N^2  (-1)^{j^*_1+j^*_2} e^{-\skal{\bfepsilon}{[\bfx]_{\bfj^*}}} [\mathscr{F}^{-1} \{\bfH\}]_{\bfj^*}, \label{fft}
    \end{align}
    where $\mathscr{F}^{-1}$ is the normalized 2-dimensional discrete inverse Fourier transform and $\bfH\in \CC^{N\times N}$ has entries
    \begin{align*}
        [\bfH]_\bfj = (-1)^{j_1+j_2} \exp( \kappa^{\QQ_{\bfh}}_{\bfR(t,T)}(- \bfepsilon + \rmi [\bftheta]_\bfj) )\hat{P}( [\bftheta]_\bfj + \rmi \bfepsilon ).
    \end{align*}
    The approximation improves by taking a large $N$ and $\overline{\theta}$.

    Now suppose we want to price the spread option for a general strike price $K>0$ using \eqref{sprid}, then we require $\log(\bfS(t)/K)=[\bfx]_\bfj$ hold for some $\bfj$, that is, it is on the $\bfx$-grid. If  $\log(\bfS(t)/K)$ is bounded by the $\bfx$-grid but not on it, taking the nearest grid point can lead to sizable pricing errors, so other approaches are considered.
    
    \begin{itemize}
        \item[1.] Using bicubic interpolation on the $\bfx$-grid to compute to $f_{t}(\bfS(t)/K,1)$.
        \item[2.] Adjust the  $\bfx$-grid so that $\log(\bfS(t)/K)$ is on it. Do this by setting $\bfj^*$ as  the index of the grid point nearest $\log(\bfS(t)/K)$, and then adjust the values of $\delta_{\theta_1},\delta_{\theta_2}$ such that $\log(\bfS(t)/K) = [\bfx]_{\bfj^*}$ holds, provided all elements of $[\bfx]_{\bfj^*}$ are nonzero.
    \end{itemize}
    
    To compute the price of a spread option $f_t( \bfS(t), K_i)$ for a panel of strike prices  $K_i$, $i=1,\dots,q$, on a single FFT evaluation, we do both. For a particular strike price of $K$, we adjust the $\bfx$-grid  so that $\log(\bfS(t)/K)$ is on it. Then a single FFT evaluation outputs  $(f_t( [\bfx]_\bfj, 1))\in\RR^{N\times N}$, from which we obtain $f_t(\bfS(t)/K_i,1)$ using bicubic interpolation. Finally,  $f_{t}(\bfS(t),K_i) = K_i f_t(\bfS(t)/K_i,\allowbreak 1)$ by \eqref{sprid}. The grid is always adjusted so that $K = 3.6$ is on it.

    \begin{remark}
        Note that the FFT used for spread option pricing is different from the one used in the Fourier inversion method for estimation. The former uses the discrete inverse Fourier transform, while the latter uses the discrete Fourier transform. \halmos
    \end{remark}

    Lastly, pricing spread options using the Monte Carlo method requires  simulating $\bfX(T_o)$ conditional on $\bfX(T_e)=\log \bfS(T_e)-\bfLambda(T_e)$. This is  done by either discretizing the SDE \eqref{ldoup} or discretizing the integral $\bfZ^*(\tau)$ in \eqref{logrettau}, where we use $\wt n = 10^5$ subintervals  for both.

        \subsection{Real data example} \label{realimpsec}

        In the real data example, we fit a model to Australian electricity and base load futures prices and then price spread options under the model. Our implementation mostly follows Section \ref{simimpsec}, except for the differences outlined here. All unmentioned details remain the same as before.

        The data consists of  8 years of  daily average electricity prices for the Australian states NSW  and Victoria in the NEM sourced from \href{https://www.aemo.com.au/energy-systems/electricity/national-electricity-market-nem/data-nem/data-dashboard-nem}{AEMO} (Australian Energy Market Operator). These prices were obtained from 2018-01-05 to 2026-01-05. This is used to estimate the model parameters. The electricity price data consists of 2923 daily observations, so that $m= 2922$ and $\Delta = \frac{1}{365.25}$ accounting for leap years.

        Let $\bfS' = (S'_1,S'_2)$, where $S'_1(t)$  and  $S'_2(t)$ are the NSW  and Victoria daily average electricity at time $t$, respectively. The observations of $S_1'$ are always positive, but 5.9\% of the observations of $S_2'$ are negative with a minimum price $-\$73.02$, so a geometric model cannot be used to model $S_2'$ directly.   Using  Remark \ref{parity} to deal with this, we let  $S_1(t) = S'_2(t)+c$, $S_2(t)=S_1'(t)$, $t\geq0$, and fit the geometric model to $\bfS =(S_1,S_2)$, where $c$ is a constant that makes the observations of $S_1$ positive. The NEM has a negative price floor, so such a $c$ exists. We take $c=100$, however, different choices lead to different models.

        We assume the seasonality function is
        \begin{align*}
            \Lambda_k(t) = {}&b_{0k} + b_{1k}t + b_{2k}\cos\left(\frac{2\pi}{p_1 }t\right)+b_{3k}\sin\left(\frac{2\pi}{p_1}t\right)\\
            \quad& + b_{4k} \cos\left(\frac{2\pi}{p_2 }t\right)+b_{5k}\sin\left(\frac{2\pi}{p_2 }t\right), \quad t\in[0,T],
        \end{align*}
        with periods $p_1=1$ and $p_2=7\Delta$, which correspond to yearly and weekly cyclical components, respectively. The inclusion of weekly seasonality in electricity prices makes the model more realistic.      
        
        The risk-free rate is assumed to be $r=0.03715$. This only affects the discount term for the pricing of spread options and  plays no role in fitting the model.
        
        At the current time $T_e=8$, we price spread options on the original prices $(S_1',S_2')$ whose payoff is given by \eqref{pp}, where the strike prices are $K'_i$, $i=1,\dots, 13$, which are equally spaced values from 20 to 80, and the time to maturity is $\tau = 0.5$. This is done by computing the price of spread options on $(S_1,S_2)$ using \eqref{parrel}.

        The data also consists of base load calendar quarter electricity futures prices for NSW and Victoria with 8 contract periods Q1 2026 to Q4 2027 on the day 2026-01-05 sourced from \href{https://www.asxenergy.com.au/}{ASX Energy} (Australian Stock Exchange). This is used to calibrate the model, so there are $q=16$ calibration instruments. The base load futures prices on $S_k$, which is $\overline{F}_k(t,T_1,T_2)$, is computed using \eqref{futadj} and \eqref{futuresprice} for $k=1,2$. For example, since the current time is 2026-01-05, the Q1 2026 base load futures price has $t=T_e+5\Delta$, $T_1 = T_e$, $T_2=T_e+90\Delta $, and is computed using 5 observed electricity prices from 2026-01-01 to 2026-01-05 by  \eqref{futurespricessum}.
        
        In the FFT method, we set $N=2^{8}$, $\bar{\theta}=40$, and the grid is always adjusted so that $K = 50$ is on it.

    \section{Simulated data results}\label{ressec}

    We now present results from simulated data examples based on the implementation details in Section \ref{simimpsec}. Results in Sections \ref{trueparsec1}--\ref{trueparsec2} are under the true model. Results in Sections \ref{simparsec1}--\ref{priceres}  are from a simulation study using fitted models.

    \subsection{Spread option pricing under true parameters}\label{trueparsec1}

    Using the true parameters,  Table \ref{tab1} gives the spread option price using the FFT method with $N=2^{11}$ grid points along each axis and $\overline{\theta}=80$, and the Monte Carlo method with $N=10^6$ simulations and $\wt n = 10^5$ subintervals by discretizing the SDE.
    
    \begin{table}[htb]
        \centering
        \begin{tabular}{cccc}
            \hline
            $K_i$ & FFT & Monte Carlo & 95\% confidence interval    \\
            \hline
            0.4 & 9.31079 & 9.32370 & $( 9.28429 ,9.36311)$\\
            1.2 & 9.03505& 9.04812& $(9.00917, 9.08708)$\\
            2.0 & 8.76650 &8.77975& $(8.74125, 8.81826)$\\
            2.8 &  8.50503& 8.51849& $(8.48044, 8.55655)$\\
            3.6 &  8.25053& 8.26418& $(8.22657, 8.30179)$\\
            4.4 &   8.00289& 8.01663& $(7.97947, 8.05379)$\\
            5.2 & 7.76200& 7.77583 &$(7.73911, 7.81255)$\\
            6.0 &  7.52773& 7.54163 &$(7.50535, 7.57790)$\\
            6.8 &   7.29998& 7.31393&$( 7.27808 ,7.34977)$\\
            7.6 &      7.07861&7.09258& $(7.05717 ,7.12799)$\\
            8.4 & 6.86350& 6.87758 &$(6.84261 ,6.91256)$\\
            9.2 & 6.65454 &6.66877&$  (6.63422, 6.70332)$\\
            10.0 &   6.45158& 6.46594&$  (6.43181, 6.50006)$\\
            
            \hline
        \end{tabular}
        \caption{Spread option prices over various strike prices $K_i$ using the FFT method with $N=2^{11}$, $\overline{\theta}=80$ and the Monte Carlo method with $N=10^6$ and $\wt n=10^5$.}\label{tab1}
    \end{table}
    
    This shows the spread option price using the FFT method closely agrees with the Monte Carlo method over the panel of strike prices. The 95\% confidence intervals are also given, which is narrow due to the large number of simulations, with all prices from the FFT method within it.

    For the Monte Carlo method, while simulating the integral should be more accurate than the SDE when it can be done exactly (for example, the model in \cite[Section 3.1]{Lu21}), the choice is less clear when the integral also needs to be discretized. In this example, for small $\wt n=10^2$, the average error is 0.67664 for discretizing the integral and  0.3700 for the SDE, both upward biased with the true price  outside the 95\% confidence intervals in all cases. But for large $\wt n=10^5$, there is no significant difference. In our simulation, the average absolute difference between the prices of the two discretization methods is $3.1\times 10^{-4}$, well within any difference detectable by the 95\% confidence intervals in Table \ref{tab1}. Thus, we conclude that discretizing the SDE is more accurate and less biased than discretizing the integral, particularly for small  $\wt n$, while the difference is negligible for large $\wt n$.
    
    Next,   the average error  over the panel of strike prices is defined as
    \begin{align}
        \text{Average error}= \frac{1}{q}\sum_{i=1}^q |O_i-E_i|,\label{avgerr}
    \end{align}
    where $O_i$ is the true price of the spread option with strike price $K_i$, and $E_i$ is  the price of the spread option computed in another way. We take the true price to be from the FFT method with $N=2^{11}$, $\bar{\theta}=80$, and it is compared with the prices using  the discretization parameters $N=2^k$, $k=6,7,\dots 10$, $\overline{\theta}=40,80$. Figure \ref{fig2} shows how this affects the average error of the FFT method.
    
    \begin{figure}[htb]
        \begin{center}
            
            \includegraphics{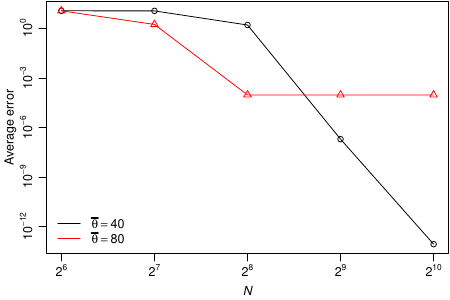}
            \caption{Average error using the FFT method for $N=2^k$, $k=6,7,\dots 10$, and $\overline{\theta}=40,80$. The plot is on a log scale.}\label{fig2}
        \end{center} 
    \end{figure}
    
    The points in Figure \ref{fig2} with large error only occur for small $N$, and in fact include situations for small $K_i$, where the spread option price is negative, or where no appropriate choice of $\bfepsilon$ in the sense of Remark \ref{epsrem} can be readily found. But as $N$ and  $\overline{\theta}$ increase, the average error vanishes, as expected. We see that  $N=2^8$, $\overline{\theta}=40$  is enough to produce highly accurate results with average error in the order of $10^{-5}$. Furthermore, the way the error reduces and flattens out is consistent with \cite[Figures 1--3]{HZ10}.

    \subsection{Spread option  parameter sensitivities}\label{trueparsec2}
    
    \begin{figure}[!tb]
        \begin{center}
            \includegraphics{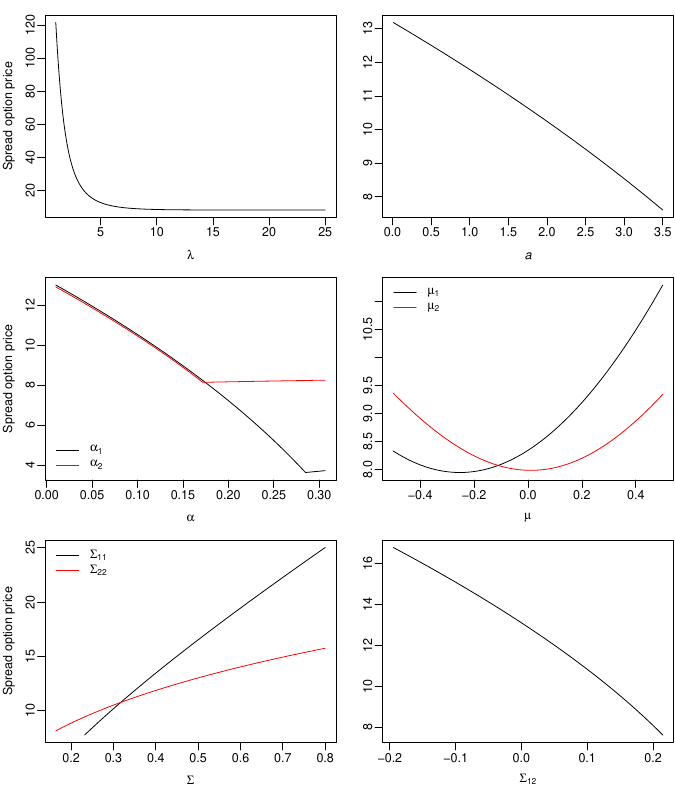}
            \caption{Spread option price with $K=3.6$ as a function of the LDOUP parameters.}\label{fig3}
        \end{center}
    \end{figure}
    
    Figure \ref{fig3}  shows  the effect of the LDOUP parameters on the spread option price with strike price $K=3.6$ and all other parameters held constant. We refer to \cite[Remark 4]{MiSz17} for a discussion about the interpretation of the parameters of the WVAG process. The parameters $a,\alpha_1,\alpha_2, \Sigma_{12}$ are plotted over their full range subject to the parameter constraints, the same is also true for the lower limit of $\Sigma_{11},\Sigma_{22}$.

    The spread option price decreases as the rate of mean reversion $\lambda$ increases, as this reduces the variance of the spread.  The plots for  the kurtosis parameters $\alpha_1$ and $\alpha_2$ have a point of nondifferentiability at $\alpha_2=\alpha_1$ due to the minimum function occurring in various terms of 
    \begin{align*}
        \myCov_{\QQ_{\bfh}}(X_1(T_o),X_2(T_o)\given \FFF_{T_e}) = \frac 12 { (1- e^{-2\lambda \tau})}\left({ [\bfSigma_{\bfh}]}_{12} + \frac{{[\bfmu_{\bfh}]}_{1}{[\bfmu_{\bfh}]}_{2}}{a} \right), 
    \end{align*}
    which follows from Corollary \ref{vgn}, and where for example, ${[\bfSigma_{\bfh}]}_{12} = a \Sigma_{12}(\alpha_1\wedge \alpha_2)/K_{\bfh}$. This covariance in turn affects the variance of the spread  under $\QQ_{\bfh}$.
    Some parameters may have  surprising effects given that they affect moments other than  what is considered their primary one.  For example, it may be expected that increasing the skewness parameters $\mu_1$ and $\mu_2$ have opposite effects, but actually both tend to increase the variance of the spread under $\QQ_{\bfh}$ in asymmetric ways, which causes the price to increase. As variance parameter $\Sigma_{22}$ increases, this increases the variance of  the spread because $\Sigma_{11}, \Sigma_{12}$ are held constant, but the correlation ${\Sigma_{12}}/\sqrt{\Sigma_{11}\Sigma_{22}}$ is not, and hence the price increases.

    \subsection{Fitting the model}\label{simparsec1}

    We consider a simulation study based on 1000 simulated sample paths. Recalling the details from Section \ref{imple}, each sample path has 2001 observations for estimating the model parameters, and 40 forward prices or call prices for calibration.

    For 1 of the 1000 simulations, Figures \ref{fig4} and \ref{fig5}  plot the log price sample path $Y_k(t)$, and compare the true and estimated seasonality function $\Lambda_k(t)$, as well as the true and fitted conditional distribution of the log returns $R_k(T_e,T_o)\given \FFF_{T_e}$ under $\QQ_\bfh$ for $k=1,2$. Define the risk-neutral drift as $\wt \bfmu=(\wt \mu_1,\wt \mu_2)$, where
    \begin{align}
        \wt \mu_k:= {}&  \EE_{\QQ_{\bfh}}[R_k(T_e,T_o)\given \FFF_{T_e}] \label{rndr}\\
        ={}&  \Lambda_k(T_o) + X_k(T_e)e^{-\lambda \tau }-Y_k(T_e)+(1-e^{-\lambda \tau})(\eta_k + {[\bfmu_\bfh]}_k  + \mu_{h_k} ),\nonumber
    \end{align}
    $k=1,2$, due to \eqref{logrettau} and \eqref{match0}, and this is shown in  Figure \ref{fig5}.

    \begin{figure}[htb]
        \begin{center}
            \includegraphics{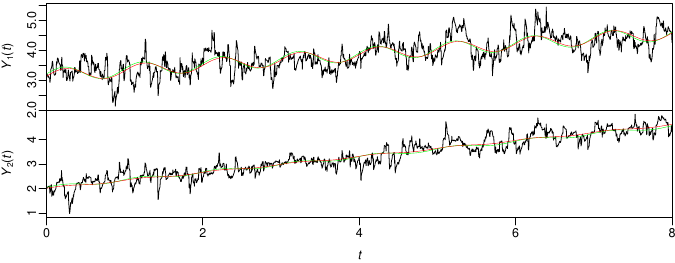}
            \caption{Sample path of the log price $Y_k(t)$ (black), true seasonality function $\Lambda_k(t)$ (green) and estimated seasonality function $\wh\Lambda_k(t)$ (red) for $k=1,2$.}\label{fig4}
        \end{center} 
    \end{figure}
    
    \begin{figure}[htb]
        \begin{center}
            \includegraphics{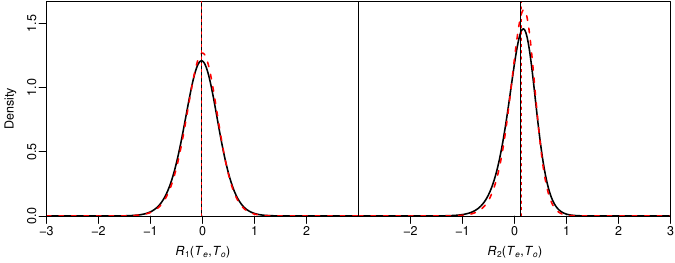}
            \caption{ True (black) and estimated (red) pdf of $R_k(T_e,T_o)\given \FFF_{T_e}$ under $\QQ_\bfh$ and the corresponding means  $\wt \mu_k$ (vertical line) for $k=1,2$.}\label{fig5}
        \end{center}
    \end{figure}

    In both Figures \ref{fig4} and \ref{fig5}, the true and estimated quantities are close, although the fitted distribution of $R_2(T_e,T_o)$ appears to have slightly lower variance than the true distribution, while the risk-neutral drifts are virtually identical.

    Table \ref{esttable} gives the true value, mean estimate, estimated standard error, and estimated relative root mean square error (RRMSE) of the seasonal parameters $\bfb$, LDOUP parameters $\bfvtheta$,  MPRs  $\bfh^f = (h_1^f,h_2^f)$  and $\bfh^c =(h_1^c,h_2^c)$  from calibrating to forward and call prices, respectively, and their corresponding risk-neutral drifts $\wt \bfmu^f=(\wt \mu_1^f,\wt \mu_2^f)$, $\wt \bfmu^c=(\wt \mu_1^c,\wt \mu_2^c)$.

    \begin{table}[htb]
        \centering
        \begin{tabular}{
                c
                S[table-format=2.5]  
                S[table-format=2.5]  
                S[table-format=1.5]  
                S[table-format=4.5]  
            }
            \hline
            Parameter & {True value} & {Mean estimate} & {Standard error} & {RRMSE (\%)} \\
            \hline
            $b_{01}$     & 3.16132 & 3.16012 & 0.08181 & 2.58672 \\
            $b_{11}$     & 0.17500 & 0.17512 & 0.01737 & 9.91921 \\
            $b_{21}$     & 0.04385 & 0.04453 & 0.05355 & 122.08732 \\
            $b_{31}$     & 0.22986 & 0.22998 & 0.05593 & 24.32271 \\
            $b_{02}$     & 2.04056 & 2.03718 & 0.07231 & 3.54595 \\
            $b_{12}$     & 0.31390 & 0.31433 & 0.01534 & 4.88649 \\
            $b_{22}$     & 0.01257 & 0.01236 & 0.04685 & 372.67504 \\
            $b_{32}$     & 0.06587 & 0.06285 & 0.04746 & 72.15719 \\
            \hline
            $\lambda$    & 18.25000 & 19.72150 & 1.88547 & 13.10125 \\
            $a$          & 3.15854 & 3.78378 & 0.48648 & 25.07668 \\
            $\alpha_1$   & 0.17183 & 0.16369 & 0.01692 & 10.91866 \\
            $\alpha_2$   & 0.28494 & 0.23210 & 0.02699 & 20.82233 \\
            $\mu_1$      & -0.03071 & -0.03114 & 0.01511 & 49.19107 \\
            $\mu_2$      & -0.20335 & -0.18828 & 0.03601 & 19.19018 \\
            $\Sigma_{11}$& 0.24598 & 0.22352 & 0.02720 & 14.33557 \\
            $\Sigma_{22}$& 0.17045 & 0.14425 & 0.02185 & 20.00932 \\
            $\Sigma_{12}$& 0.19452 & 0.16270 & 0.02204 & 19.89720 \\
            \hline
            $h_{1}^f$    & -0.10000 & -0.11860 &0.60465 & 604.63349 \\
            $h_{2}^f$    & -0.03000 & 0.01899 & 0.80720 & 2694.26668 \\
            $h_{1}^c$    & -0.10000 & -0.09548 & 0.58939 & 589.11447 \\
            $h_{2}^c$    & -0.03000 & 0.05402 & 0.77961 & 2612.46743 \\
            \hline
            $\widetilde{\mu}_1^f$ & -0.02806 &  -0.02838 & 0.05864 & 208.87460 \\
            $\widetilde{\mu}_2^f$ & 0.11560 & 0.11901 & 0.04982 &  43.17967 \\
            $\widetilde{\mu}_1^c$ & -0.02806 & -0.02021 & 0.01131 & 49.05332 \\
            $\widetilde{\mu}_2^c$ & 0.11560 & 0.12612 & 0.00985 & 12.46426 \\
            \hline
        \end{tabular} 
        \caption{True value, mean estimate, standard error, and RRMSE for the seasonal parameters, LDOUP parameters, MPRs, risk-neutral drifts estimated over 1000 simulation with 2001 observations and 40 calibration instruments per simulation.}\label{esttable}
    \end{table}

    From these results, most model parameters are estimated with moderately low RRMSE of approximately 10--30\%, other than those with true values near 0. Related to this, note that errors in  estimating the model parameters  are compensated for by the  calibration. For instance, $h_1^c$ and $h_2^c$ have RRMSEs  of  589\% and 2612\%, which can take them far from their true value, so that $\wt \mu^c_1$ and  $\wt \mu^c_2$  can have small RRMSEs of 49\% and 12\%, close to their true value. In fact, on an absolute scale as seen by the vertical line in Figure \ref{fig5}, the differences between the true and estimated values of $\wt \mu^c_1$ and $\wt \mu^c_2$ are negligible. Ultimately, it is $\wt \bfmu$ that has a first-order effect on the energy derivative prices, and the fitted models should primarily  be judged on how accurately they price spread options.

    \begin{figure}[htb]
        \begin{center}
            \includegraphics{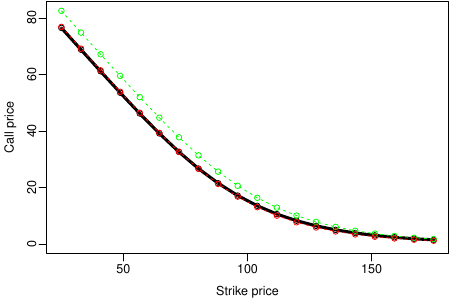}
            \caption{The true call prices on $S_1$  (black) over various strike prices $K_i$, and the predicted call prices using the true MPR $\bfh$ (green) and  calibrated MPR $\wh \bfh^c$ (red).}\label{figlast}
        \end{center}
    \end{figure}
    
    Figure \ref{figlast} shows the results of  calibrating to call prices. Based on the estimated model parameters, the calibrated MPR $\wh\bfh^c$  gives predicted call prices with a very close fit to the true call prices on $S_1$, but the true MPR  $\bfh$ does not.   While Figures  \ref{fig5} and \ref{figlast}  are based on a single simulation, these are typical results seen in most simulations.       Only results for calibrating to call prices on $S_1$ are plotted, although we calibrated to call prices  on $S_1$ and $S_2$. A similarly close match holds for call prices on $S_2$, while calibration to forwards does not produce such a close match.

    \subsection{Spread option pricing under the fitted model}\label{priceres}

    Finally, over the 1000 simulations, the results from using the fitted model to price spread options are presented in Table \ref{spreadresults}, with  the calibration to forward and call prices shown separately. The spread option prices are computed by  the FFT method with  $N=2^{8}$, $\bar{u}=40$,  which gives sufficient accuracy as noted in Section \ref{trueparsec1}. The true prices are from Table \ref{tab1}, computed with $N=2^{11}$, $\bar{u}=80$. The mean prices  and their  RRMSEs are given.

    \begin{table}[!htb]
        \centering
        \centerline{
            \begin{tabular}{cccccc}
                \hline
                & & \multicolumn{2}{c}{Calibrated to forwards} & \multicolumn{2}{c}{Calibrated to calls} \\ 
                $K_i$ & True price & Mean fitted price & RRMSE (\%) & Mean fitted price & RRMSE (\%) \\
                \hline
                0.4  & 9.31079 & 8.28003 & 29.28306 & 8.14641 & 16.05140 \\
                1.2  & 9.03505 & 8.01567 & 29.60813 & 7.88303 & 16.37409 \\
                2.0  & 8.76650 & 7.75871 & 29.93505 & 7.62714 & 16.69879 \\
                2.8  & 8.50503 & 7.50936 & 30.25915 & 7.37888 & 17.02162 \\
                3.6  & 8.25053 & 7.26751 & 30.58043 & 7.13819 & 17.34162 \\
                4.4  & 8.00289 & 7.03275 & 30.90307 & 6.90463 & 17.66241 \\
                5.2  & 7.76200 & 6.80489 & 31.22491 & 6.67802 & 17.98405 \\
                6.0  & 7.52773 & 6.58422 & 31.54294 & 6.45865 & 18.30082 \\
                6.8  & 7.29998 & 6.37004 & 31.86120 & 6.24581 & 18.61972 \\
                7.6  & 7.07861 & 6.16257 & 32.17869 & 6.03969 & 18.93635 \\
                8.4  & 6.86350 & 5.96113 & 32.49764 & 5.83965 & 19.25737 \\
                9.2  & 6.65454 & 5.76624 & 32.81268 & 5.64619 & 19.57346 \\
                10.0 & 6.45158 & 5.57759 & 33.12549 & 5.45901 & 19.88679 \\
                \hline
        \end{tabular}}
        \caption{Spread option prices for the fitted models and their RRMSE over various strike prices $K_i$ computed by the  FFT method with  $N=2^{8}$, $\bar{\theta}=40$. The true prices are from Table \ref{tab1}.} \label{spreadresults}
    \end{table}
    
    The results show that the spread options  are priced less accurately when calibrated to forwards, with an average RRMSE of 31.2\%, than to calls, where it is 18.0\%. This shows how errors in fitting the model, from 2001 observations and 40 calibration instruments, propagate to pricing errors.  As the sample size increases, so that the model parameters are estimated more accurately, the pricing error should approach 0.

    It should be unsurprising that calibrating to calls leads to lower pricing error since the forwards calibrate over various maturities, whereas the calls calibrate to the same maturity and  different strikes, which implies, at least in part, the marginal component distribution and joint distribution of $\bfS(T_o)\given \FFF_{T_e}$. Both calibration methods produce prices that are downward biased. Calibrating to calls  has slightly larger bias and substantially lower standard error here.

       \section{Real data results}\label{realdata}
        
        We present results from a real data example based on the implementation details in Section \ref{realimpsec}, where we fit the model to Australian electricity and base load futures prices, and price spread options.

        Figure \ref{fig7} plots the sample path of the log price  $Y_k(t)$ and the estimated seasonality function $\wh\Lambda_k(t)$, in which a small weekly cyclical component is visible. Table \ref{realesttable} gives the estimated parameters of the fitted model, where $(\wt\mu_1,\wt\mu_2)$ are the risk-neutral drifts defined in \eqref{rndr}.

        \begin{figure}[htb]
            \begin{center}
                \includegraphics{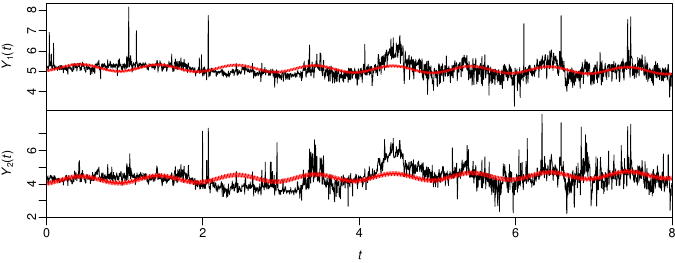}
                \caption{Sample path of the log price $Y_k(t)$ (black) and the estimated seasonality function $\wh\Lambda_k(t)$ (red) for $k=1,2$.}\label{fig7}
            \end{center} 
        \end{figure}

        \begin{table}[!htb]
            \centering
            \begin{tabular}{ c @{\hspace{2em}} c }
                \begin{tabular}[t]{c S[table-format=2.5]}
                    \hline
                    Parameter & {Estimate} \\
                    \hline
                    $b_{01}$ &  5.19036 \\
                    $b_{11}$ & -0.01937 \\
                    $b_{21}$ & -0.14974 \\
                    $b_{31}$ &  0.07126 \\
                    $b_{41}$ & -0.01048 \\
                    $b_{51}$ & -0.06942 \\
                    $b_{02}$ &  4.22798 \\
                    $b_{12}$ &  0.04056 \\
                    $b_{22}$ & -0.19417 \\
                    $b_{32}$ &  0.08653 \\
                    $b_{42}$ & -0.01921 \\
                    $b_{52}$ & -0.13987 \\
                    \hline
                \end{tabular}
                
                &
                \begin{tabular}[t]{
                        c S[table-format=3.5]
                    }
                    \hline
                    Parameter & {Estimate} \\
                    \hline
                    $\lambda$     & 123.14798 \\
                    $a$           &   2.15520 \\
                    $\alpha_1$    &   0.34922 \\
                    $\alpha_2$    &   0.31290 \\
                    $\mu_1$       &  -0.01398 \\
                    $\mu_2$       &   0.13247 \\
                    $\Sigma_{11}$ &   0.20842 \\
                    $\Sigma_{22}$ &   0.49101 \\
                    $\Sigma_{12}$ &   0.26145 \\
                    \hline
                    $h_{1}$       &  0.80684 \\
                    $h_{2}$       & -0.48216 \\
                    \hline
                    $\widetilde{\mu}_1$ & 0.44854 \\
                    $\widetilde{\mu}_2$ & 0.78568 \\
                    \hline
                \end{tabular}
            \end{tabular}
            \caption{The estimate of the seasonal parameters, LDOUP parameters, MPRs, risk-neutral drifts.}\label{realesttable}
        \end{table}
        
        \begin{figure}[!htb]
            \begin{center}
                \includegraphics{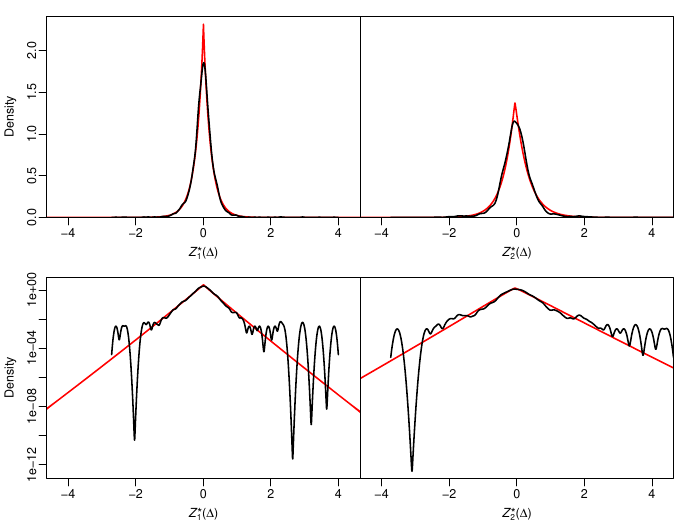}
                
                \caption{Kernel density estimate (black) and fitted (red) pdf of  $Z_1^*(\Delta)$ (left) and $Z_2^*(\Delta)$ (right) under $\PP$ on the original scale (top) and log scale (bottom).}\label{fig8}
            \end{center}
        \end{figure}

        To assess the goodness-of-fit, we plot in  Figure \ref{fig8} the pdf of the marginal components of the innovation term $\bfZ^*(\Delta)$ under $\PP$ obtained by  the kernel density estimate and the fitted distribution.  The observations of $\bfZ^*(\Delta)$ were obtained by \eqref{filteredobs} and \eqref{sim} with the estimate of $\lambda$. The plot effectively compares the empirical and fitted distribution of the residual for each marginal component on the original scale and log scale. The fit  appears to be reasonably good. As the MLE is not a robust estimation method, it has fit the tails well on the log scale, but at the expense of the fit in the center as shown on the original scale. In fact, if not using the MLE, it is possible to obtain a visually near-perfect fit on the original scale but a worse fit to the tails on the log scale. To assess the joint distribution, the theoretical and sample value of the correlation $\myCorr(\bfZ_1^*(\Delta),\bfZ_2^*(\Delta))$ are 0.55 and 0.47, respectively, which is reasonably close.

        Another way to assess goodness-of-fit is to compare the predicted and observed prices of the base load futures prices for each of the 8 contract periods on the day 2026-01-05 as shown in Figure \ref{fig9}. Note that from the base load futures prices on $S_1$, subtracting $c$ gives the base load futures prices for Victoria. The  average percentage error, which is defined by \eqref{avgerr} with summand $|O_i-E_i|/E_i$ instead, is 4.62\%. In addition, the predicted and observed futures curves have a qualitatively similar shape.  Overall, the results suggest the model is a reasonable fit to the data despite some imperfections.
        
        \begin{figure}[!htb]
            \begin{center}
                \includegraphics{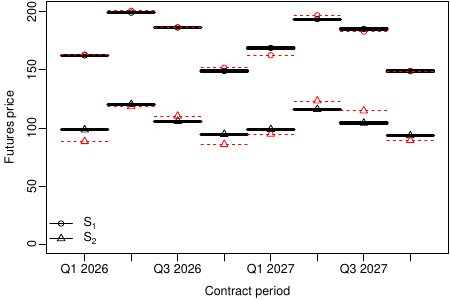}
                \caption{The observed base load futures price  (black) on $S_1$ and $S_2$ for the contract periods Q1 2026 to Q4 2027, and the predicted futures prices (red) from the fitted model.}\label{fig9}
            \end{center}
        \end{figure}

        The results here can also be compared to those in the literature. In \cite{bsb04}, a univariate LDOUP with a NIG BLDP was fit separately to electricity and gas price data, and their Figure 5 and 6 is similar to Figure \ref{fig8} here. They make the same finding regarding the fit to the center and tail. In  \cite{cf05}, a jump-diffusion BDLP is used instead, and their Figure 9 has a slightly worse fit compared to   Figure \ref{fig9} here.  In contrast to both these univariate models,  we have fit a multivariate model.

        Finally, using the fitted model on the day 2026-01-05, we give the price of the spread option on  $(S_1,S_2)$ with strike prices $K_i$, and the corresponding price of the spread option on the original prices $(S_1',S_2')$ with strike prices $K'_i$ in Table \ref{realspread}.

        \begin{table}[!htb]
            \centering
            \begin{tabular}{cccc}
                \hline
                $K_i$ & Price on $(S_1,S_2)$  & $K'_i$ &  Price on $(S'_1,S'_2)$ \\
                \hline
                20 & 66.32141 & 80 & 6.27083 \\
                25 & 62.08124 & 75 & 6.93864 \\
                30 & 57.93256 & 70 & 7.69793 \\
                35 & 53.88806 & 65 & 8.56142 \\
                40 & 49.96123 & 60 & 9.54256 \\
                45 & 46.16612 & 55 & 10.65543 \\
                50 & 42.51608 & 50 & 11.91336 \\
                55 & 39.02593 & 45 & 13.33119 \\
                60 & 35.70724 & 40 & 14.92047 \\
                65 & 32.57084 & 35 & 16.69204 \\
                70 & 29.62514 & 30 & 18.65432 \\
                75 & 26.87572 & 25 & 20.81287 \\
                80 & 24.32528 & 20 & 23.17042 \\
                \hline
            \end{tabular}
            \caption{Spread option prices  on  $(S_1,S_2)$ for the fitted models over various strike prices $K_i$, and the  corresponding spread option prices on $(S'_1,S'_2)$ over the  strike prices $K_i'$.} \label{realspread}
        \end{table}

    \section{Concluding remarks}\label{conc}

    The framework for pricing energy derivatives presented here has wider applicability beyond spread options. Following the explanations in Eberlein, Glau and Papapantoleon \cite{EbGlPa2010} and \cite{KLW11,AlSc2018}, the FFT method can similarly be applied to other payoff functions with a known analytical Fourier transform. Moreover, any European-style payoff in our model can readily be priced using the Monte Carlo method. Other extensions, such as generalizing the speed of mean reversion $\lambda$ to a matrix, using a generalized hyperbolic (GH) process, or other processes as the BDLP, would likely lead to a better fit in Figure \ref{fig8} because the GH process has 1 additional shape parameter and includes the VG process as a subclass. Another direction is to consider an arithmetic model.  However, these are beyond the scope of this paper.

    \subsection*{Acknowledgments}
    Kevin Lu thanks Mesias Alfeus for discussions  relating to this work. We thank the referees for their comments.

    \bibliographystyle{plain}  
    \bibliography{bibliography}  

\begin{thebibliography}{10}

\bibitem{AlSc2018}
M.~Alfeus and E.~Schl{\"o}gl.
\newblock On numerical methods for spread options.
\newblock {\em FIRN Research Paper}, 2018.

\bibitem{BB14}
L.~Ballotta and E.~Bonfiglioli.
\newblock Multivariate asset models using {L}{\'e}vy processes and
  applications.
\newblock {\em Eur. J. Finance}, 22(13):1320--1350, 2016.

\bibitem{Ba01}
H.~Bauer.
\newblock {\em Measure {\&} Integration Theory}.
\newblock Walter de Gruyter, Berlin, 1992.

\bibitem{bd15}
F.~E. Benth and N.~Detering.
\newblock Pricing and hedging {A}sian-style options on energy.
\newblock {\em Finance Stoch.}, 19(4):849--889, 2015.

\bibitem{BKMB}
F.~E. Benth, J.~Kallsen, and T.~Meyer-Brandis.
\newblock A non-{G}aussian {O}rnstein-{U}hlenbeck process for electricity spot
  price modeling and derivatives pricing.
\newblock {\em Appl. Math. Finance}, 14(2):153--169, 2007.

\bibitem{BenthPir18}
F.~E. Benth and A.~Pircalabu.
\newblock A non-{G}aussian {O}rnstein-{U}hlenbeck model for pricing wind power
  futures.
\newblock {\em Appl. Math. Finance}, 25(1):36--65, 2018.

\bibitem{bsb04}
F.~E. Benth and J.~{\v{S}}altyt{\.e}-Benth.
\newblock The normal inverse {G}aussian distribution and spot price modelling
  in energy markets.
\newblock {\em Int. J. Theor. Appl. Finance}, 7(2):177--192, 2004.

\bibitem{bsb06}
F.~E. Benth and J.~{\v{S}}altyt{\.e}-Benth.
\newblock Analytical approximation for the price dynamics of spark spread
  options.
\newblock {\em Stud. Nonlinear Dyn. Econom.}, 10(3), 2006.

\bibitem{benthbook}
F.~E. Benth, J.~{\v S}altyt{\.e}~Benth, and S.~Koekebakker.
\newblock {\em Stochastic modelling of electricity and related markets}.
\newblock World Scientific, 2008.

\bibitem{bs14}
F.~E. Benth and M.~D. Schmeck.
\newblock Pricing futures and options in electricity markets.
\newblock In Sofia Ramos and Helena Veiga, editors, {\em The Interrelationship
  Between Financial and Energy Markets}, pages 233--260. Springer, Berlin
  Heidelberg, 2014.

\bibitem{bv13}
F.~E. Benth and L.~Vos.
\newblock Pricing of forwards and options in a multivariate non-{G}aussian
  stochastic volatility model for energy markets.
\newblock {\em Adv. Appl. Probab.}, 45(2):572--594, 2013.

\bibitem{ant}
A.~Bubna-Litic.
\newblock {\em Multivariate Variance Gamma Processes and Option Pricing}.
\newblock Australian National University, Honours Thesis, 2012.

\bibitem{BKMS16}
B.~Buchmann, B.~Kaehler, R.~Maller, and A.~Szimayer.
\newblock Multivariate subordination using generalised {G}amma convolutions
  with applications to {V}ariance {G}amma processes and option pricing.
\newblock {\em Stoch. Proc. Appl.}, 127(7):2208--2242, 2017.

\bibitem{BLM17a}
B.~Buchmann, K.~W. Lu, and D.~B. Madan.
\newblock Weak subordination of multivariate {L}{\'e}vy processes and variance
  generalised gamma convolutions.
\newblock {\em Bernoulli}, 25(1):742--770, 2019.

\bibitem{CaFu2013}
R.~Caldana and G.~Fusai.
\newblock A general closed-form spread option pricing formula.
\newblock {\em J. Bank. Finance}, 37(12):4893--4906, 2013.

\bibitem{cardur03}
R.~Carmona and V.~Durrleman.
\newblock Pricing and hedging spread options.
\newblock {\em SIAM Rev.}, 45(4):627--685, 2003.

\bibitem{CM99}
P.~Carr and D.~Madan.
\newblock Option valuation using the fast {F}ourier transform.
\newblock {\em J. Comput. Finance}, 2(4):61--73, 1999.

\bibitem{cf05}
{\'A}.~Cartea and M.~G. Figueroa.
\newblock Pricing in electricity markets: A mean reverting jump diffusion model
  with seasonality.
\newblock {\em Appl. Math. Finance}, 12(4):313--335, 2005.

\bibitem{EbGlPa2010}
E.~Eberlein, K.~Glau, and A.~Papapantoleon.
\newblock Analysis of {F}ourier transform valuation formulas and applications.
\newblock {\em Appl. Math. Finance}, 17(3):211--240, 2010.

\bibitem{ek95}
E.~Eberlein and U.~Keller.
\newblock Hyperbolic distributions in finance.
\newblock {\em Bernoulli}, 1(3):281--299, 1995.

\bibitem{FS07}
T.~Fung and E.~Seneta.
\newblock Modelling and estimation for bivariate financial returns.
\newblock {\em Int. Stat. Rev.}, 78(1):117--133, 2010.

\bibitem{gs22}
M.~Gardini and P.~Sabino.
\newblock Exchange option pricing under variance gamma-like models.
\newblock {\em Appl. Math. Finance}, 29(6):494--521, 2022.

\bibitem{gss22b}
M.~Gardini, P.~Sabino, and E.~Sasso.
\newblock A bivariate normal inverse {G}aussian process with stochastic delay:
  Efficient simulations and applications to energy markets.
\newblock {\em Appl. Math. Finance}, 28(2):178--199, 2021.

\bibitem{gss2021}
M.~Gardini, P.~Sabino, and E.~Sasso.
\newblock Correlating {L}{\'e}vy processes with self-decomposability:
  applications to energy markets.
\newblock {\em Decis. Econ. Finance}, 44(2):1253--1280, 2021.

\bibitem{gss22c}
M.~Gardini, P.~Sabino, and E.~Sasso.
\newblock The variance gamma++ process and applications to energy markets.
\newblock {\em Appl. Stoch. Models Bus. Ind.}, 38(2):391--418, 2022.

\bibitem{gs94}
E.~S.~W. Gerber, H. U.~Shiu.
\newblock Option pricing by {E}sscher transforms.
\newblock {\em Trans. Soc. Actuar.}, 76:99--191, 1994.

\bibitem{gzlw23}
G.~Gracianti, R.~Zhou, J.~S.-H. Li, and X.~Wu.
\newblock An assessment of model risk in pricing wind derivatives.
\newblock {\em Ann. Actuar. Sci.}, 2023(3):479--502, 2023.

\bibitem{hub}
F.~Hubalek and C.~Sgarra.
\newblock Esscher transforms and the minimal entropy martingale measure for
  exponential {L}{\'e}vy models.
\newblock {\em Quant. Finance}, 6(2):125--145, 2006.

\bibitem{HZ10}
T.~R. Hurd and Z.~Zhou.
\newblock A {F}ourier transform method for spread option pricing.
\newblock {\em SIAM J. Financial Math.}, 1(1):142--157, 2010.

\bibitem{KLW11}
Y.~K. Kwok, K.~S. Leung, and H.~Y. Wong.
\newblock Efficient options pricing using the fast {F}ourier transform.
\newblock In J.-C. Duan, W.~K. H\"{a}rdle, and J.~E. Gentle, editors, {\em
  Handbook of Computational Finance}, pages 579--604. Springer, 2011.

\bibitem{lzd10}
M.~Li, J.~Zhou, and S.-J. Deng.
\newblock Multi-asset spread option pricing and hedging.
\newblock {\em Quant. Finance}, 10(3):305--324, 2010.

\bibitem{Lu21}
K.~W. Lu.
\newblock Calibration for multivariate {L}évy-driven {O}rnstein-{U}hlenbeck
  processes with applications to weak subordination.
\newblock {\em Stat. Inference Stoch. Process.}, 25:365--396, 2022.

\bibitem{LS10}
E.~Luciano and P.~Semeraro.
\newblock Multivariate time changes for {L}{\'e}vy asset models:
  Characterization and calibration.
\newblock {\em J. Comput. Appl. Math.}, 233(5):1937--1953, 2010.

\bibitem{MCC98}
D.~B. Madan, P.~P. Carr, and E.~C. Chang.
\newblock The variance gamma process and option pricing.
\newblock {\em Rev. Finance}, 2(1):79--105, 1998.

\bibitem{MaSe90}
D.~B. Madan and E.~Seneta.
\newblock The variance gamma (v.g.) model for share market returns.
\newblock {\em J. Bus.}, 63(4):511--524, 1990.

\bibitem{mas04}
H.~Masuda.
\newblock On multidimensional {O}rnstein-{U}hlenbeck processes driven by a
  general {L}{\'e}vy process.
\newblock {\em Bernoulli}, 310(1):97--120, 2004.

\bibitem{MiSz17}
M.~Michaelsen and A.~Szimayer.
\newblock Marginal consistent dependence modeling using weak subordination for
  {B}rownian motions.
\newblock {\em Quant. Finance}, 18(11):1909--1925, 2018.

\bibitem{pasc}
A.~Pascucci.
\newblock {\em PDE and Martingale Methods in Option Pricing}.
\newblock Springer-Verlag Italia, 2011.

\bibitem{pellegrino2016}
T.~Pellegrino.
\newblock A general closed form approximation pricing formula for basket and
  multi-asset spread options.
\newblock {\em J. Math. Finance}, 6:944--974, 2016.

\bibitem{pelsab2014}
T.~Pellegrino and P.~Sabino.
\newblock Pricing and hedging multiasset spread options using a
  three-dimensional {F}ourier cosine series expansion method.
\newblock {\em J. Energy Markets}, 7(2):71--92, 2014.

\bibitem{rus15}
L.~R{\"u}schendorf and V.~Wolf.
\newblock Cost-efficiency in multivariate {L}{\'e}vy models.
\newblock {\em Depend. Model.}, 3(1):1--16, 2015.

\bibitem{sab2020}
P.~Sabino.
\newblock Exact simulation of variance gamma-related {OU} processes:
  application to the pricing of energy derivatives.
\newblock {\em Appl. Math. Finance}, 27(3):207--227, 2020.

\bibitem{sab2023}
P.~Sabino.
\newblock Normal tempered stable processes and the pricing of energy
  derivatives.
\newblock {\em SIAM J. Financial Math.}, 14(1):99--126, 2023.

\bibitem{sato99}
K.~Sato.
\newblock {\em {L}{\'e}vy Processes and Infinitely Divisible Distributions}.
\newblock Cambridge University Press, Cambridge, 1999.

\bibitem{saya}
K.~Sato and M.~Yamazato.
\newblock Stationary processes of {O}rnstein-{U}hlenbeck type.
\newblock In K.~It{\^{o}} and J.~V. Prohorov, editors, {\em Probability Theory
  and Mathematical Statistics}, pages 541--551. Springer, Berlin, 1983.

\bibitem{sco}
W.~Schoutens.
\newblock {\em L{\'e}vy Processes in Finance: Pricing Financial Derivatives}.
\newblock John Wiley \& Sons Ltd., 2003.

\bibitem{shir}
A.~N. Shiryaev.
\newblock {\em Essentials of Stochastic Finance}.
\newblock World Scientific Publishing, 1999.

\bibitem{vstt}
L.~Valdivieso, W.~Schoutens, and F.~Tuerlinckx.
\newblock Maximum likelihood estimation in processes of {O}rnstein-{U}hlenbeck
  type.
\newblock {\em Stat. Inference Stoch. Process.}, 12:1--19, 2009.

\bibitem{zagier}
D.~Zagier.
\newblock The dilogarithm function.
\newblock In P.~Cartier, P.~Moussa, B.~Julia, and P.~Vanhove, editors, {\em
  Frontiers in Number Theory, Physics, and Geometry II}, pages 3--65. Springer,
  2007.

\end{thebibliography}
    
\end{document}